
\documentclass[12pt]{article}%

\nonstopmode
%\setpagewiselinenumbers

\pagestyle{plain}  \setlength{\oddsidemargin}{0.0in}
\setlength{\evensidemargin}{0.0in} \setlength{\topmargin}{0in}

\usepackage{hyperref,url}
\usepackage{tikz-cd}
\usepackage[all]{xy}
\usepackage{tikz}
\usepackage[round]{natbib}
\usepackage{amsmath}
\usepackage{amsbsy}
\usepackage{amscd}
\usepackage{amsopn}
\usepackage{amstext}
\usepackage{amsxtra}
\usepackage{amssymb}
\usepackage{amsthm,amsfonts}
\usepackage{graphicx}
\usepackage{epstopdf}
\usepackage{bbm}
\usepackage{epsfig}
\usepackage[title,titletoc]{appendix}
\usepackage{titling}

%\sectionfont{\fontsize{12}{15}\selectfont}
%\subsectionfont{\fontsize{12}{15}\selectfont\textit}
%\subsubsectionfont{\fontsize{12}{15}\selectfont\it}
%\newcommand{\sergey}[1]{{\bf\color{blue}#1}}
%\def\arraystretch{1.7}

%\input{/Users/tomhurd/Nextcloud/Documents/FINANCE_RESEARCH/Latex_Shared_Files/TomsLatexDefinitions}

\newtheorem{theorem}{Theorem}%[Chapter]
\newtheorem{proposition}[theorem]{Proposition}
\newtheorem{lemma}[theorem]{Lemma}

\newtheorem{conjecture}[theorem]{Conjecture}

\newtheorem{definition}{Definition}

\newtheorem{assumption}{Assumption}
\newtheorem{remark}{Remark}

\def\beq{\begin{eqnarray}}
\def\eeq{\end{eqnarray}}
\def\beqq{\begin{eqnarray*}}
\def\eeqq{\end{eqnarray*}}
\def\beeq{\begin{eqnarray*}}
\def\eeeq{\end{eqnarray*}}
\def\be{\begin{equation}}
\def\ee{\end{equation}}
\newcommand{\bea}{\begin{eqnarray}}
\newcommand{\eea}{\end{eqnarray}}
\newcommand{\beaa}{\begin{eqnarray*}}
\newcommand{\eeaa}{\end{eqnarray*}}
\newcommand{\barr}{\begin{array}}
\newcommand{\earr}{\end{array}}

\newcommand{\benum}{\begin{enumerate}}
\newcommand{\eenum}{\end{enumerate}}

% MFM handbooks

  %left bracket

   %partial

 % delta

\def\BF={{\mathbb{F}}}
\def\BG={{\mathbb{G}}}
\def\BH={{\mathbb{H}}}
\def\bOne={{\bf 1}}

\def\CC{{\cal C}}
\def\CD{{\cal D}}
\def\CE{{\cal E}}

\def\CM{{\cal M}}

\def\CT{{\cal T}}

%  Definitions from Credit Risk courseware

\def\dee{{\rm d}}

\def\diag{{{\rm diag}}}

\renewcommand{\|}{ { \lvert } }
\def\qed{\hfill$\sqcap\kern-7.0pt\hbox{$\sqcup$}$\\}

\def\BBC{\mathbb{C}}
\def\BBE{\mathbb{E}}
\def\BBP{\mathbb{P}}

\def\BBR{\mathbb{R}}

\def\BBZ{\mathbb{Z}}

%	Double Cascade defs

  %left bracket

% Systemic Book Defs

\def\Pois{{\rm Pois}}

\def\One{{\bf 1}}

%EA}}
\def\ED{{\rm D}}%ED}}
\def\TA{{\rm TA}}%EA}}
\def\TL{{\rm TL}}%ED}}
\def\IA{{\rm Z}}%IA}}
\def\ID{{\rm X}}%ID}}
\def\EE{{\rm E}}%ID}}

 \def\dee{\mathrm{d}}

\textwidth 6.5in
\textheight 8.3in

\setlength{\topmargin}{-.30in}

%\newenviron{es}{\begin{equation}\begin{split}\BODY\end{split}\end{equation}}
%\renewenvironment{abstract}
% {%\small
%  \begin{center}
%  \bfseries \abstractname\vspace{-.5em}\vspace{0pt}
%  \end{center}
%  \list{}{
%    \setlength{\leftmargin}{0.25in}%
%    \setlength{\rightmargin}{\leftmargin}%
%  }%
%  \item\relax}
% {\endlist}
%
\linespread{1.18333}
%
%%%%%%%%%%%%%%%%%%%%%%%%%%%%%%%%%%%%%%%%%%%%%%%%%%%%%%%%%%%%%%%%%%%%%%%%%%%%%%%%%%%%%%%%%%
%\numberwithin{equation}{section}
\begin{document}

\title{\normalsize \bf SYSTEMIC CASCADES ON INHOMOGENEOUS RANDOM FINANCIAL NETWORKS}
\author{\footnotesize T. R. HURD \\
\footnotesize \it Mathematics \& Statistics, McMaster University, 1280 Main St. West \\
\footnotesize \it Hamilton, Ontario, L8S 4L8, Canada \\
\footnotesize \it hurdt@mcmaster.ca}

%
%\predate{\footnotesize\centering Received (day,month,year)\\}\postdate{\footnotesize\centering Revised (day,month,year)\\}
\date{\footnotesize 
June 30, 2019}
\maketitle

\begin{abstract}

This systemic risk paper introduces inhomogeneous random financial networks (IRFNs). Such models are intended to describe parts, or the entirety, of a highly heterogeneous network of banks and their interconnections, in the global financial system. Both the balance sheets and the stylized crisis behaviour of banks are ingredients of the network model.  A systemic crisis is pictured as triggered by a shock to banks' balance sheets, which then leads to the propagation of damaging shocks and the potential for amplification of the crisis, ending with the system in a cascade equilibrium. Under some conditions the model has ``locally tree-like independence (LTI)'', where a general percolation theoretic argument  leads to an analytic fixed point equation describing the cascade equilibrium when the number of banks $N$ in the system is taken to infinity. This paper focusses on mathematical properties of the framework in the context of Eisenberg-Noe solvency cascades generalized to account for fractional bankruptcy charges.  New results including a definition and proof of the ``LTI property'' of the Eisenberg-Noe solvency cascade mechanism lead to explicit  $N=\infty$ fixed point equations that arise under very general model specifications.  The essential  formulas are shown to be implementable via well-defined approximation schemes, but numerical exploration of some of the wide range of potential applications of the method is left for future work. \\

\noindent{\bf Key words:\ }
Systemic risk, banking network, random financial network, cascade, interbank exposure, funding liquidity,  insolvency, percolation theory, locally tree-like.\\

\noindent{\bf MSC:}
05C80, 91B74, 91G40, 91G50

\end{abstract}
\renewcommand{\thefootnote}{\alph{footnote}}

Ê

Ê
Ê

\section{Introduction}
\label{sec:1}
Systemic risk (SR), the risk of large scale failure of the financial system as defined for example in \cite{Schwarcz08}, has long been understood (see e.g. \cite{Kaufman94}) to involve cascades of contagious shocks of different types, notably funding liquidity shocks such as bank panics and runs, and solvency shocks caused by failed banks. Compared to systems arising in other areas of applied science, the financial system in question is extraordinarily complex in a diversity of aspects. The agents or ``nodes'' will be thought of as banks in this paper, but the picture can easily be extended to allow for different types of  financial institutions, such as firms, funds and households.  The business and trading strategies of a single bank, itself a complex hierarchical entity, are the result of decision making distributed across all the subdivisions of the bank, and are often made under great uncertainty.  The basic ``links'' or ``edges'' of the system, representing exposures, are bi-directional connections between pairs of banks. These change daily and are typically complex arrangements of financial contracts and securities that reference many underlying financial and economic factors. \cite{Haldane09} points out that the legal description of some individual contracts, such CDOs, may run to thousands or millions of pages. The time scales relevant in banking range from microseconds to decades. Banking systems in different countries are strongly linked. On top of this intrinsic complexity, \cite{Corrigan1982} describes how ``banks are special'': They play a critical systemic role at the heart of the much larger macroeconomy. To compound these difficulties, essential systemic data, particularly disaggregated data with counterparty identification, if it exists, is typically only available to regulators, not to observers or the banks themselves. The system is intrinsically opaque. 

A popular approach to understanding financial systemic risk, see \cite{NieYanYorAle07}, \cite{GaiKapa10a}, \cite{AminContMinc16}, \cite{Hurd16a}, \cite{Hurd18}, has been to construct random  financial networks (RFNs), and to explore how different bank behaviour characteristics lead to cascades and amplification of shocks. At the root of such network models is the choice of a random graph distribution to model the ``skeleton graph'' of interconnections between banks. Typically in these models, the edges in these graphs are directed, by convention pointing from debtor to creditor bank.  Some of the popular  choices for the skeleton have been {\em directed configuration graphs} and {\em scale-free graphs}. For example, directed configuration graphs are constructed starting from a ``degree distribution'' that specifies the number of in and out-edges for each node.  So-called inhomogeneous random graphs (IRGs), see \cite{BolSvaRio07} and \cite{vdHofstad16}, are a related class that have been introduced to systemic risk theory more recently in \cite{DeteMeyePana17}  to  capture the diversity of bank sizes, connectivity and types better than the directed configuration graphs normally assumed in this line of SR research.  

The  goal of this paper is to explore how {\em  inhomogeneous random financial networks (IRFNs)}, in other words RFNs whose skeletons are IRGs,  comprising a diverse collection of bank types linked by random exposures, might behave when subjected to the types of crisis triggers and contagion mechanisms often considered in the SR literature. 
A general category of random financial networks will be introduced that consists of a connectivity  ``skeleton'' drawn from the broad class of  inhomogeneous random graphs, on which is defined a random collection of bank balance sheets and interbank exposures. Any large random shock that hits the system is likely to trigger a cascade sequence of secondary shocks converging to a {\em cascade equilibrium} that represents the final outcome of the crisis. This cascade mapping results from the {\em cascade mechanism} that encodes the deterministic behavioural rules banks are assumed to follow during the crisis. 

IRFN models for any value of $N$ can always be explored by pure simulation alone. Alternatively, like configuration graphs, sequences of IRGs parametrized by increasing $N$ can be specified that have an important property called  {\em locally tree-like independence (LTI)}. As described in \cite{Bordenave16} and others, this property implies that the random graph sequence is ``locally weakly convergent'' as $N\to\infty$ to a collection of connected Galton-Watson random trees. The LTI property of IRGs implies for example that for any $k>1$,  the density of cycles of length $k$ in the graph goes to zero as $N$ goes to infinity.   

\cite{AminContMinc16} and \cite{DeteMeyePana17} have proven for certain simpler cascade models on LTI sequences of random skeleton graphs that the large $N$ asymptotics of the cascade equilibrium is determined by a fixed point of a scalar-valued function, whose value can be interpreted as an average default probability. The proof of such results  typically makes use of a combinatorial theorem of \cite{Wormald99}, which is difficult to extend to the more complex situations considered in the present paper. However, the result itself makes intuitive sense in broader generality, because of two manifestations of the LTI property. First,  the LTI property of the skeleton in these models implies it converges (in the ``local weak'' sense) to a collection of random trees. Second, an analogous LTI property can be defined for the cascade mapping itself, which has the meaning that a desired independence structure is exact on all finite random trees. These two properties combined lead to the limiting approximating cascade formulas and therefore it is not surprising that they are exact in the infinite size limit. In general, based on the  symmetries of the Galton-Watson skeleton, these formulas can be shown to boil down to a fixed point equation for a  monotonic function of a particular collection of variables.  It is has  been observed in \cite{MelHacMasMucGle11} that such asymptotic formulas often seem to provide an ``unreasonably effective approximation'' of finite sized systems studied by simulation.  We will follow this common thread in the SR literature,  namely to determine the dependence on the number of banks $N$ in the system, as $N\to\infty$. 

The main contributions of this paper are:
\begin{enumerate}
  \item Introduction of the {\em inhomogeneous random financial network (IRFN) framework}, closely related to the modelling framework of \cite{DeteMeyePana17} , that provides a flexible and scalable architecture for modelling many of the complex network characteristics thought to be relevant to systemic risk.  In particular, we will develop solvency cascade models for networks of banks with arbitrary types.
    \item We present for the first time a cascade analysis for an economically important family of models extending the EN 2001 framework to include {\em partial fractional recovery} of defaulted interbank assets. We also formulate and prove a ``locally tree-like independence property'' for this class of solvency cascade mechanisms. 
    \item A general characterization is provided for the first cascade step in IRFN default models, in the limit $N\to\infty$. 
  \item The large $N$ asymptotics for full solvency cascades that arise in IRFN models is explored. Tractable recursive formulas for {\em cascade equilibria} are formulated and conjectured to hold in the large $N$ limit, based on the LTI property  of both the IRFN itself and the cascade mechanism. \end{enumerate}

The IRFN construction provides two specific benefits compared to the ``configuration graph'' RFN constructions of \cite{GaiKapa10a} and \cite{AminContMinc16}. Firstly, bank type has a direct and basic financial interpretation: logically, a node's degree is dependent on its type. Type is a more intuitive and general notion than node degree, and better suited to SR modelling where edges and degrees are constantly changing while the type of node does not change. Bank types can encode an unlimited range of node characteristics. Secondly, bank type makes better financial sense than node degree as the conditioning random variables determining system dependencies. In our setting, assuming random balance sheets and exposures are independent conditioned on node types is better justified than assuming their  independence conditioned on node degrees. 

The large $N$ arguments developed in this paper can be used to investigate properties that are likely to hold in a wide range of cascade models on large financial networks with the assumed inhomogeneous random graph structure. Such models go far beyond the small class of RFN models for which rigorous asymptotic results have been derived. The heuristic arguments presented here, although conjectural,  will  complement  a well-developed strand of rigorous results in the literature surveyed by \cite{vdHofstad16}, that relate percolation properties on random graphs to properties of branching processes.  Thus this paper presents a road map to developing rigorous percolation methods to prove the conjectures developed in this paper.

Section 2 introduces the concept of {\em inhomogeneous random financial networks} (IRFNs). Also included in the section are some of the probabilistic tools we will use in cascade analysis on an IRFN. Section 3 explores two of the important cascade channels treated in the SR literature, namely solvency cascades and funding liquidity cascades, in terms of {\em cascade mechanisms} that are deterministic rules of behaviour banks are assumed to follow during a crisis. This section then focusses on the cascade mapping that results from a specific solvency cascade mechanism operating within the IRFN  model. Certain consequences of the LTI property are demonstrated supporting the conjecture that some cascade models have a tractable analytic asymptotic $N=\infty$ form. Section 4 provides a brief exploration of some of the issues and the kind of data required to implement the IRFN cascade method for real networks. Finally, a concluding section discusses some of the important questions and next steps to address in order to better understand financial systemic risk. \\

\noindent{\bf Notation:\ } For a positive integer $N$, $[N]$ denotes the set $\{1,2,\dots, N\}$. For a  random variable $X$, its  cumulative distribution function (CDF), probability density function (PDF) and characteristic function (CF) will be denoted $F_X, \rho_X=F_X',$ and $ \hat f_X$ respectively. For any event $A$, $\One(A)$ denotes the indicator random variable, taking values in $\{0,1\}$. 
Any collection of random variables $X=(X_1,X_2,\dots)$ generates a sigma-algebra (or informally ``information set'') denoted by $\sigma(X)$. Landau's ``big O'' notation $f^{(N)}=O(N^\alpha)$ for some $\alpha\in\BBR$ is used for a sequence $f^{(N)}, N=1,2,\dots$ to mean that  $f^{(N)}N^{-\alpha}$ is bounded as $N\to\infty$.

 \section{Defining IRFNs } The financial system at any moment in time will be represented by an object we call an inhomogeneous random financial network, or IRFN. This is the specification of a multidimensional random variable that captures two levels of structure. The primary level of the IRFN, called the {\em skeleton graph}, is the directed random graph with $N$ nodes, which from now on we take to represent ``banks'', and whose directed edges represent the existence of a significant exposure of one bank to another.  The secondary layer specifies the {\em balance sheets} of the banks, including the inter-bank exposures, conditioned on knowledge of the skeleton graph. 
 
 Inhomogeneity in the IRFN model derives from classifying banks by type. The collection of random bank types $\{T_v\}_{v\in[N]}$ will be assumed to completely determine the dependence structure of other random variables. In other words, conditional expectations with respect to the sigma-algebra $\sigma(T):=\sigma(T_v, v\in[N])$ will typically exhibit conditional independence.

\subsection{Skeleton Graph} The skeleton graph is modelled as a directed inhomogeneous random graph (DIRG),  generalizing Erd\"os-Renyi random graphs, in which directed edges are drawn independently between ordered pairs of banks, not with equal likelihood but with likelihood that depends on the bank types. This class has its origins in \cite{ChungLu02} and  \cite{BritDeijMart06} and has been studied in generality in \cite{BolSvaRio07}. For further details about this class, please see the textbooks \cite{vdHofstad16} or \cite{Hurd16a}[Section 3.4]. The DIRG structure arises by the assumption that exposures between counterparties can be treated as independent Bernoulli random variables $I_{vw}$ defined for pairs of banks $(v,w)$, with a probability that depends on their types $T_v,T_w$. 

\begin{assumption}[Skeleton Graph] The primary layer of an IRFN, namely the skeleton graph ${\rm DIRG}(\BBP,\kappa,N)$,  is a directed inhomogeneous random graph with $N$ nodes labelled by $v\in [N]$. It can be defined by two collections of random variables $T_v, v\in[N]$ and $I_{vw}, v,w\in[N]$, with  sigma-algebras $\sigma(T), \sigma(I)$ and $\sigma(T,I)=\sigma(T)\vee \sigma(I)$.  
\begin{enumerate}
\item Nodes: Each node, representing a bank, has type  $T_{v}\in\CT$ drawn independently with probability $\BBP(T)$ from a finite list of {\em types} $\CT:=[M]$ of cardinality $M\ge 2$.  
\item Edges: Directed edges correspond to the non-zero entries of the {\em incidence matrix} $I$. For each pair $v\ne w\in [N]$,  $I_{vw}$ is the indicator for $w$ to be exposed to $v$, which is to say that $v$ has borrowed from $w$. The collection of edge indicators  $I_{vw}$ is an independent family of Bernoulli random variables, conditioned on the type vector $T:=(T_v)_{v\in[N]}$, with probabilities 
  \be\label{PMK} \BBP[I_{vw}=1\mid \sigma(T)]:=\BBP[I_{vw}=1\mid T_v=T,T_w=T']=(N-1)^{-1}\kappa(T,T')\One(v\ne w)\ . \ee

  \end{enumerate}
  \end{assumption}
 Here $\kappa:[M]^2\to[0,\infty)$, the {\em probability mapping kernel}, is assumed to be independent of $N$. It determines the likelihood that two banks $v,w$ of the given types have an exposure edge from $v$ to $w$.  For consistency, we require that $N-1\ge \max_{T,T'}\kappa(T,T')$.

\subsection{Balance Sheets and the Crisis Trigger}\label{triggersection} The additional fundamental assumption of the IRFN modeling framework is that the balance sheets for all banks are derivable from an independent collection of multivariate random variables,  {\em conditioned on the skeleton}. For the types of cascade analysis presented here, balance sheets will be viewed at the coarse-grained resolution as shown in Table \ref{basicbalancesheet2}.

\begin{table}[!htbp]
  \centering 
{\centering\begin{tabular}{|c||c|}
\hline
% after \\ : \hline or \cline{col1-col2} \cline{col3-col4} ...
 Assets  &  Liabilities  \\ \hline\hline inter-bank assets $\bar\IA$
 & inter-bank debt $\bar\ID$ \\ \hline external illiquid assets $\bar{\rm A}$& external debt $\bar\ED$  \\ \hline  external liquid assets $\bar{\rm C}, \bar\Xi$ & equity $\bar\EE,\bar\Delta$  \\
\hline
\end{tabular}}
\caption{A stylized bank balance sheet. }\label{basicbalancesheet2}
\end{table}

Prior to the onset of the crisis, a bank $v$ has a balance sheet that consists of {\em nominal values} \index{nominal value} of assets and liabilities $[\bar \IA,\bar{\rm A}, \bar{\rm C},\bar \ID,\bar{\rm D}, \bar{\rm E}]$ (labelled by barred quantities), which correspond to the aggregated values of the contracts, valued as if all banks are solvent. Nominal values can also be considered {\em book values} \index{book values} or {\em face values}\index{face values}. Assets (loans and securities) and liabilities (debts) are decomposed into internal and external quantities depending on whether the counterparty is a bank or not. The internal assets  $\bar \IA$ and liabilities $\bar \ID$ of the system can be decomposed into the collection of nominal exposures $\bar \Omega_{vw}$.  Banks and institutions that are not part of the system under analysis are deemed to be part of the exterior, and their exposures are included as part of the external debts and assets. Finally, only two categories of external assets are considered. {\em Fixed assets} model the retail loan book and realize only a fraction of their value if liquidated prematurely while {\em liquid assets} include government treasury bills and the like that are assumed to be as liquid as cash.

\begin{definition}\label{balancesheet} The {\em total nominal value of assets} $\overline\TA_v$ of bank $v$ prior to the crisis consists of the {\em nominal internal assets}  ${\bar \IA}_v$, the {\em nominal external illiquid assets} $\bar{\rm A}_v$, and the {\em nominal external liquid assets} $\bar{\rm C}_v$. The {\em total nominal value of liabilities} $\overline\TL_v$ of the bank consists of the {\em nominal internal debt} ${\bar \ID}_v$, the {\em nominal external debt} ${\bar \ED}_v$ and the bank's {\em nominal equity}  ${\bar \EE}_v$. The {\em nominal exposure} of bank $w$ to bank $v$ is denoted by $I_{vw}\bar\Omega_{vw}$. All components of $\bar{\rm B}$ and $\bar \Omega$ are non-negative, and the {\em accounting identities} are satisfied:
\beq
\nonumber &&{\overline \IA}_v=\sum_wI_{wv} \bar \Omega_{wv}, \quad {\bar \ID}_v=\sum_wI_{vw}\bar \Omega_{vw},\quad \sum_v {\bar \IA}_v=\sum_v {\bar \ID}_v,\quad \bar \Omega_{vv}=0\ ,\\
\label{accounting}
&&\overline\TA_v:={\bar \IA}_v+\bar{\rm A}_v+\bar{\rm C}_v={\bar \ID}_v+{\bar \ED}_v+{\bar \EE}_v=: \overline\TL_v\ . \eeq
The independent components of the nominal balance sheet will be denoted by $\bar {\rm B}_v=[\bar{\rm A}_v, \bar{\rm C}_v, {\bar \EE}_v]$.
\end{definition}

A {\em crisis trigger} at a moment in time, which we label by step $n=0$, occurs when a  shock $\delta \rm B=[\delta{\rm A}, \delta{\rm C}, {\delta \EE}]$ to the balance sheets is sufficiently severe to put some banks into a stressed state where not all of their balance sheet entries  ${\rm B}^{(0)}=\bar {\rm B}+\delta {\rm B}$  are positive. For simplicity we assume $\delta\Omega=0,\Omega^{(0)}=\bar\Omega$. To maintain the convention that balance sheet entries are never negative, we introduce {\em buffers} in place of $\bar{\rm C},\bar{\rm E}$. The {\em cash buffer}  $\Xi^{(0)}_v:=\bar{\rm C}_v+\delta{\rm C}_v$ may be negative, in which case the bank $v$ is said to be {\em illiquid}. Similarly, the {\em solvency buffer} $\Delta^{(0)}_v:=\bar{\rm E}_v+\delta{\rm E}_v$ may now  be negative, in which case the bank is said to be {\em insolvent} or, equivalently, {\em bankrupt}. In our general systemic risk modelling paradigm, the {\em  cascade} that follows the crisis trigger will be viewed for $n\ge 0$  as a step-wise dynamics for the collection of balance sheets ${\rm B}^{(n)}_v$ of the entire system as it tries to {\em resolve} these illiquid and insolvent banks. 

Now we make some pragmatic probabilistic assumptions about the initial balance sheet and exposure random variables at $n= 0$, conditioned on the vector of bank types $T=(T_v)_{v\in[N]}$.  Let us denote by $\sigma(T)$ the sigma-algebra generated by $T$.
\begin{assumption}[Balance Sheets and Exposures]
The secondary layer of an IRFN, the collection of initial balance sheets and exposures ${\rm B}^{(0)}_v,\bar\Omega_{vw}$ at step $n=0$,  are continuous random variables that are mutually independent, and independent of $\sigma(I)$,  conditioned on $\sigma(T)$.

\begin{enumerate}
 \item For each bank $v$, the marginal CDF of ${\rm B}^{(0)}_v=[{\rm A}^{(0)}_v,\Xi^{(0)}_v,\Delta^{(0)}_v]$ conditioned on $\sigma(T)$ is an increasing continuous function of $x\in\BBR_+\times\BBR^2$ taking values in $[0,1]$ and depending only on $T_v\in[M]$:  \be
F_{\rm B}(x\mid T_v):=\BBP({\rm B}^{(0)}_v\le x\mid\sigma(T))\label{BalCDF}\ .\ee
 Note that ${\rm A}_v^{(0)}$ is a positive random variable whereas the buffers may be negative. The {\em  initially illiquid banks} are those with $\Xi^{(0)}_v<0$ and  {\em initially insolvent banks} are those with $\Delta^{(0)}_v<0$.
\item For each edge $vw$, the marginal CDF of ${\bar\Omega}_{vw}$ conditioned on $\sigma(T)$ is an increasing function on $\BBR_+=[0,\infty)$ depending only on $T_v,T_w\in[M]$: \be
F_{\Omega}(x\mid T_v, T_w):=\BBP(\bar\Omega_{vw}\le x\mid\sigma(T))\label{OmegaCDF}\ ,\ee
such that
\[F_{\Omega}(0\mid T_v, T_w)=0,\quad \lim_{x\to\infty}F_{\Omega}(x\mid T_v, T_w)=1\ . \]
\end{enumerate}
\end{assumption}

In summary, a finite IRFN representing the system after a crisis trigger amounts to a collection of random variables $(T,I,{\rm B}^{(0)},\bar\Omega)$ satisfying Assumptions 1 and 2.

\subsection{Asymptotic Properties of IRFNs} 

 \subsubsection{Degree Distribution of the Skeleton Graph} 
 
 One is often concerned with the number of counterparties of nodes in directed random graphs, in other words, the in- and out-degree distributions. In DIRG networks, the degree distributions have a natural Poisson mixture structure in the large $N$ limit.   By permutation symmetry, we need only consider bank $1$ with arbitrary type $T_1=T$, whose  in/out degree  is defined as the pair $(\dee_1^-,\dee_1^+)=\sum_{w=2}^N (I_{w1},I_{1w})$, a sum of conditionally IID bivariate random variables.  Each term has the identical bivariate conditional characteristic function
 \beeq \BBE^{(N)}[e^{ik_1I_{w1}}e^{ik_2I_{1w}}\mid T_1=T]&=&\sum_{T'\in[M]}\BBP(T')\left(1+(N-1)^{-1}\kappa(T,T')(e^{ik_1}-1)\right)\\
 &&\times\ \left(1+(N-1)^{-1}\kappa(T',T)(e^{ik_2}-1)\right)\ .
 \eeeq
 
The conditional CF of $(\dee_1^-,\dee_1^+)$ is the $N-1$ power of this function, and dropping higher order terms in $N^{-1}$ this can be written
\beq \BBE^{(N)}[e^{ik_1\dee^-_{1}+ik_2\dee^+_{1}}\mid T]&=&\label{degreeCF}\\
&&\hspace{-2in}\left[1+\frac1{N-1}\sum_{T'}\BBP(T')\left( \kappa(T,T')(e^{ik_1}-1)+\kappa(T',T)(e^{ik_2}-1)\right)+O(N^{-1})\right]^{N-1}\ ,\nonumber
\eeq
 which displays  simple asymptotic structure as $N\to\infty$.   
 \begin{proposition}\label{Prop1} The characteristic function of the joint in/out degree $(\dee_v^-,\dee_v^+)$ of a bank $v$, conditioned on its bank-type $T\in[M]$, is $2\pi$-biperiodic   on $\BBR^2$ and has the $N\to\infty$ limiting behaviour:    
\beq \label{prop1}\hat f^{(N)}(k_1,k_2\mid T)&=&\hat f(k_1,k_2\mid T)\   \left(1+O(N^{-1}) \right)\ ,\\
\hat f(k_1,k_2\mid T)&:=& \exp\left[\lambda^-(T)(e^{ik_1}-1)
  +\lambda^+(T)(e^{ik_2}-1)\right]\ ,
 \nonumber \eeq
  where $\lambda^+(T)=\sum_{T'}\BBP(T')\kappa(T',T), \lambda^-(T)=\sum_{T'}\BBP(T')\kappa(T,T')$. Here,  convergence of the logarithm of \eqref{prop1} is in $L^2([0,2\pi]\times[0,2\pi])$.
  \end{proposition}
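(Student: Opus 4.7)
The plan is to exploit the conditional independence of edges built into the skeleton definition in order to factor the joint CF into an $(N-1)$-fold product of identical single-edge terms, and then to perform a standard Poisson-limit expansion of that product. By the permutation symmetry of the law of the graph under relabelling of nodes, it suffices to take $v=1$; and since $\dee_1^-$ and $\dee_1^+$ are integer-valued, both $\hat f^{(N)}(k_1,k_2\mid T)$ and the putative limit $\hat f(k_1,k_2\mid T)$ are automatically $2\pi$-biperiodic in each of $k_1,k_2$, so that assertion requires no separate argument.

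The key factorisation is that, conditional on $T_1=T$ alone, the family $\{(T_w, I_{w1}, I_{1w})\}_{w=2}^N$ is IID across $w$: the types $T_w$ are IID draws from $\BBP$ by Assumption 1(1), and given $(T_1,T_w)$ the two edge indicators are independent Bernoullis by Assumption 1(2). Consequently
\[
\hat f^{(N)}(k_1,k_2\mid T) = \left(g_N(k_1,k_2;T)\right)^{N-1},
\]
where $g_N$ is precisely the single-edge average already displayed in the text preceding the proposition. Multiplying out the two Bernoulli factors inside $g_N$ and averaging over $T'$ with weights $\BBP(T')$ then yields
\[
g_N(k_1,k_2;T) = 1 + \frac{a(k_1,k_2;T)}{N-1} + \frac{R_N(k_1,k_2;T)}{(N-1)^2},
\]
with $a(k_1,k_2;T)=\lambda^-(T)(e^{ik_1}-1)+\lambda^+(T)(e^{ik_2}-1)$ and $R_N$ the explicit cross term $\sum_{T'}\BBP(T')\kappa(T,T')\kappa(T',T)(e^{ik_1}-1)(e^{ik_2}-1)$.

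Since $|e^{ik_j}-1|\le 2$ and $\kappa$ is bounded on the finite set $[M]^2$, both $|a|$ and $|R_N|$ are bounded uniformly in $(k_1,k_2)\in\BBR^2$ and in $N$. Applying $\log(1+z) = z + O(|z|^2)$, which is valid once $|z|\le 1/2$ and hence for $N$ sufficiently large uniformly in $k$, and multiplying by $N-1$ gives
\[
(N-1)\log g_N(k_1,k_2;T) = a(k_1,k_2;T) + O(N^{-1}),
\]
uniformly in $(k_1,k_2)$. Exponentiating yields the pointwise bound stated in \eqref{prop1}, and the $L^2([0,2\pi]^2)$ convergence of the logarithm follows at once from this uniform $O(N^{-1})$ control together with compactness of the torus. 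The only delicate point is to verify that $g_N(k_1,k_2;T)$ stays bounded away from zero uniformly in $(k_1,k_2)\in[0,2\pi]^2$ and large $N$, so that the complex logarithm is unambiguously defined with a consistent branch choice; this is immediate from the uniform convergence $g_N\to 1$ and represents the main, though modest, technical obstacle.
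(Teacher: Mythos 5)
Your proof is correct, and its skeleton is the same as the paper's: permutation symmetry to reduce to $v=1$, the observation that conditional on $T_1=T$ the triples $(T_w,I_{w1},I_{1w})_{w\ge 2}$ are IID so the conditional CF is the $(N-1)$-st power of the single-edge average, and then a logarithmic expansion of that power. Where you diverge is in how the limit of the logarithm is justified: the paper feeds $\log\bigl(1+yg(k_1,k_2,y)\bigr)$ with $y=(N-1)^{-1}$ into its technical Lemma~1, which produces an $O(N^{-1})$ bound in the $L^2$ norm via Taylor's remainder theorem and Fubini, whereas you expand $g_N=1+a/(N-1)+R_N/(N-1)^2$ explicitly and use the elementary bound $\log(1+z)=z+O(|z|^2)$ together with the uniform bounds $|e^{ik_j}-1|\le 2$ and $\max_{T,T'}\kappa(T,T')<\infty$ to get $(N-1)\log g_N=a+O(N^{-1})$ uniformly on the torus. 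Your route is more elementary and in fact yields a stronger conclusion (uniform, hence sup-norm, control, from which the stated $L^2([0,2\pi]^2)$ convergence is immediate by compactness), and it sidesteps verifying the hypotheses of Lemma~1; you also make explicit the branch-of-logarithm point, i.e.\ that $g_N\to 1$ uniformly so the principal logarithm is well defined for large $N$, which the paper leaves implicit. The trade-off is that the paper's Lemma~1 is formulated for a general hyperinterval precisely so that the same device can be reused verbatim in Proposition~3 (and the heuristics for Conjecture~1), where the relevant variable ranges over the non-compact set $[0,\infty)$ and an $L^2$ formulation is the natural one; your uniform-bound shortcut exploits the special feature of the degree CF that the integrand is bounded uniformly in $k$, which is all that is needed here.
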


This type of limit can be handled by the following technical lemma, proved in the Appendix.
 
 \begin{lemma}\label{limitlemma} Let $\bar y>0$ and $I$ be any hyperinterval in $\BBR^d$. Suppose $g(x,y):I\times[0,\bar y]\to\BBC$ is a bivariate function such that  $g(\cdot,y), \partial_y g(\cdot,y), \partial^2_y g(\cdot,y)$ are pointwise bounded and in $L^2(I)$ for each value $y\in[0,\bar y]$. Then 
 \[ \lim_{y\to 0} \||\frac1{y}\log(1+yg(x,y))]- g(x,0)\||_{L^2}=O(y)\ .\]
\end{lemma}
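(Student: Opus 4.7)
The plan is to split the difference into two pieces which are each $O(y)$ in $L^2(I)$, by combining a Taylor expansion in $y$ of the function $g$ itself with the Taylor expansion of $\log(1+u)$ around $u=0$. Concretely, write
\[
\frac{1}{y}\log\bigl(1+y\,g(x,y)\bigr)-g(x,0)
=\bigl[g(x,y)-g(x,0)\bigr]+\Bigl[\tfrac{1}{y}\log\bigl(1+y\,g(x,y)\bigr)-g(x,y)\Bigr],
\]
and bound the two bracketed terms separately in $L^2(I)$.

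For the first bracket, I would apply Taylor's theorem in the $y$-variable with integral remainder: $g(x,y)-g(x,0)=y\int_0^1\partial_y g(x,sy)\,ds$. The assumption that $\partial_y g(\cdot,y)\in L^2(I)$ for every $y\in[0,\bar y]$, together with Minkowski's inequality for integrals, then gives $\|g(\cdot,y)-g(\cdot,0)\|_{L^2}\le y\sup_{s\in[0,1]}\|\partial_y g(\cdot,sy)\|_{L^2}=O(y)$ as $y\to 0$ (here I use that the $L^2$ norms of $\partial_y g(\cdot,s)$ are controlled for $s$ near zero, which is implicit in the hypotheses).

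For the second bracket, the idea is that for $|u|\le 1/2$ the analytic expansion $\log(1+u)=u+R(u)$ has remainder bounded by $|R(u)|\le u^2$. Since $g(\cdot,y)$ is pointwise bounded by some constant $M$ (depending on $y$ but controlled as $y\to 0$), we have $|y\,g(x,y)|\le 1/2$ for all sufficiently small $y$, so
\[
\Bigl|\tfrac{1}{y}\log(1+y\,g(x,y))-g(x,y)\Bigr|=\tfrac{1}{y}\bigl|R(y\,g(x,y))\bigr|\le y\,|g(x,y)|^2.
\]
Using the pointwise bound $|g(x,y)|\le M$ together with $g(\cdot,y)\in L^2(I)$ yields $\|g(\cdot,y)^2\|_{L^2}^2\le M^2\|g(\cdot,y)\|_{L^2}^2<\infty$, so this bracket has $L^2(I)$ norm at most $y\,M\,\|g(\cdot,y)\|_{L^2}=O(y)$.

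Combining the two estimates by the triangle inequality then gives the claimed $O(y)$ bound on the $L^2(I)$ norm, so that the limit is zero and the lemma is established. The main technical point, rather than any deep obstacle, is justifying the pointwise bound $|y\,g(x,y)|\le 1/2$ uniformly in $x\in I$ for $y$ small; this is where the pointwise boundedness hypothesis is essential (mere $L^2$ control would not suffice since $I$ may be unbounded). The role of the $\partial_y^2 g$ hypothesis is the usual one of ensuring that the Taylor remainder in the first bracket is itself $O(y^2)$ pointwise and in $L^2$, should a sharper-than-$O(y)$ statement be desired, but the $O(y)$ conclusion claimed here requires only control of $g$ and $\partial_y g$.
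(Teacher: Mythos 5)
Your proof is correct and takes a genuinely different route from the paper. You split the error as $[g(x,y)-g(x,0)]+[\tfrac1y\log(1+yg(x,y))-g(x,y)]$, handling the first term by a first-order Taylor expansion of $g$ in $y$ and the second by the elementary estimate $|\log(1+u)-u|\le|u|^2$ for $|u|\le\tfrac12$ (valid for complex $u$, which matters since $g$ is built from characteristic functions); the paper instead works with the single function $f(x,y)=\log(1+yg(x,y))-y\,g(x,0)$, observes that $f(x,0)=\partial_yf(x,0)=0$, and applies Taylor's theorem with integral remainder to get $\|f(\cdot,y)\|_{L^2}\le\bigl(\int_0^y(y-v)\,\dee v\bigr)\max_{v\in[0,\bar y]}\|\partial^2_yf(\cdot,v)\|_{L^2}=O(y^2)$, hence $O(y)$ after dividing by $y$. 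Your decomposition buys something real: as you note, it never touches $\partial_y^2 g$, so the $O(y)$ conclusion holds under weaker hypotheses, and the two error sources (variation of $g$ in $y$ versus the nonlinearity of the logarithm) are cleanly separated; the paper's one-shot expansion is more compact but needs $\partial^2_y f\in L^2(I)$, which drags in $\partial^2_y g$ and a smallness condition on $\bar y$. Both arguments rely on the same tacit uniformity in $y$ that the stated hypotheses do not literally provide: you need $\sup_{s}\|\partial_yg(\cdot,s)\|_{L^2}$ and the pointwise bound $M(y)$ controlled near $y=0$ (so that $|y\,g(x,y)|\le\tfrac12$ and the logarithm is well defined), while the paper needs $\max_{v\in[0,\bar y]}\|\partial^2_yf(\cdot,v)\|_{L^2}<\infty$, which is exactly the same kind of unstated uniform control; so the caveat you flag is a defect of the lemma's formulation, not of your argument, and in the paper's actual applications $g$ is bounded uniformly in both variables, so both proofs apply.
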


\begin{proof}(Proposition 2) Apply Lemma \ref{limitlemma} to $\log\left(\BBE^{(N)}[e^{ik_1\dee^-_{1}+ik_2\dee^+_{1}}\mid T]\right)$ for each $T$ with $N-1=y^{-1}$ and 
\beqq g(k_1,k_2,y)&=&\sum_{T'\in[M]}\BBP(T')\Bigl[\kappa(T,T')(e^{ik_1}-1)+\kappa(T',T)(e^{ik_2}-1)\\
&&+y\kappa(T,T')(e^{ik_1}-1)\kappa(T',T)(e^{ik_2}-1) \Bigr]\ .
\eeqq
\end{proof}

Thus, for different values of $T$, the conditional joint in/out degree distribution is always asymptotic to a bivariate Poisson distribution. Now, recall that a {\em finite mixture} of a collection of probability distribution functions is the probability formed by a convex combination. We can then see that under our simplification of a finite type space $[M]$, the asymptotic {\em unconditional} in/out degree distribution of any bank is a finite mixture:
\[  \hat f^{(N)}(k_1,k_2)= \sum_{T}\BBP(T) \hat f^{(N)}(k_1,k_2\mid T)\ ,\]
 where each component has a bivariate Poisson distribution with Poisson parameters
  \[  \Bigl(\sum_{T'}\BBP(T')\kappa(T',T), \sum_{T'}\BBP(T')\kappa(T,T_w)\Bigr)\ .\]
 The mixing variable is the bank-type $T$, with the mixing weights $\BBP(T)$.

Proposition \ref{Prop1} is a manifestation of the locally tree-like property of IRFNs. Consider the limiting distribution of the interbank debt $\bar\ID_1=\sum_{w\ne 1}\bar\Omega_{w1}$ of a typical bank $v=1$.  
\begin{proposition} \begin{enumerate}
  \item The characteristic function of the interbank debt $\bar\ID_1$ of bank $1$, conditioned on its bank-type $T\in[M]$, has the $N\to\infty$ limiting behaviour:    
 \beq\label{prop3} \hat f^{(N)}_\ID(k\mid T)&:=&\BBE^{(N)}\left[\prod_{w\ne 1}e^{ikI_{w1}\bar\Omega_{w1}}\mid T\right]=\hat f_X(k\mid T)(1+O(N^{-1})) \\
\hat f_X(k\mid T)&=&
 \exp\left[\sum_{T'}\BBP(T')\kappa(T',T)(\hat f_{\bar\Omega}(k\mid T',T)-1)\right]
\label{hatfX}\eeq
   where convergence of the logarithm of \eqref{prop3} is in $L_2[0,\infty)$.
 \item Any finite collection of interbank debt random variables $\{\bar\ID_v, v\in [1,2,\dots, P]\}$ is independent in the $N\to\infty$ limit.\end{enumerate}
\end{proposition}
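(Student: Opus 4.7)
The plan is to extend the argument of Proposition~\ref{Prop1}: exploit conditional independence to factorize the characteristic function, then apply Lemma~\ref{limitlemma} to pass to the $N\to\infty$ limit. The key ingredients are (i) the IID structure of the bank types (Assumption 1), (ii) the mutual independence of $(I_{wv},\bar\Omega_{wv})$ across distinct ordered pairs given $\sigma(T)$ (Assumption 2), and (iii) the uniform bound $|\hat f_{\bar\Omega}|\le 1$, which ensures the function fed to Lemma~\ref{limitlemma} is bounded and square-integrable on every finite interval.

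For Part~1, I first observe that conditioning on $T_1=T$ makes the triples $\{(T_w,I_{w1},\bar\Omega_{w1})\}_{w=2}^N$ IID, so
\[
\hat f^{(N)}_\ID(k\mid T)=\prod_{w=2}^N\BBE^{(N)}\bigl[e^{ikI_{w1}\bar\Omega_{w1}}\mid T_1=T\bigr].
\]
A single factor, after conditioning further on $T_w=T'$ and averaging over $T'\in[M]$, reduces to
\[
1+(N-1)^{-1}\sum_{T'\in[M]}\BBP(T')\kappa(T',T)\bigl(\hat f_{\bar\Omega}(k\mid T',T)-1\bigr).
\]
Applying Lemma~\ref{limitlemma} with $y=(N-1)^{-1}$ and $g(k,y)$ equal to the above sum (which is independent of $y$) then delivers $\log\hat f^{(N)}_\ID(k\mid T)\to\log\hat f_X(k\mid T)$ in $L^2$ with an $O(N^{-1})$ rate, proving \eqref{prop3}--\eqref{hatfX}.

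For Part~2, I would compute the joint conditional characteristic function $\BBE^{(N)}\bigl[\prod_{v=1}^P e^{ik_v\bar\ID_v}\mid T_1,\dots,T_P\bigr]$. Conditioning first on the full vector $\sigma(T)$, Assumption~2 yields a product over all $P(N-1)$ ordered edges with head in $[P]$. These split into \emph{internal} edges (tail also in $[P]$; at most $P(P-1)$ of them) and \emph{external} edges (tail $w\notin[P]$). The internal factors are each $1+O(N^{-1})$, so their combined contribution is $1+O(N^{-1})$. For each external tail $w\notin[P]$, expanding
\[
\prod_{v=1}^P\bigl(1+(N-1)^{-1}\kappa(T_w,T_v)(\hat f_{\bar\Omega}(k_v\mid T_w,T_v)-1)\bigr)
\]
to leading order in $(N-1)^{-1}$ and then averaging over $T_w$ produces a single factor of the form $1+(N-1)^{-1}\sum_{v,T'}\BBP(T')\kappa(T',T_v)(\hat f_{\bar\Omega}(k_v\mid T',T_v)-1)+O(N^{-2})$. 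Raising to the $(N-P)$-th power and re-applying Lemma~\ref{limitlemma} produces $\prod_{v=1}^P\hat f_X(k_v\mid T_v)$, which is the product of the Part~1 marginals; conditional asymptotic independence given the types follows, and hence (by independence of the $T_v$) also unconditional asymptotic independence.

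The main obstacle is the careful control of error terms in Part~2. The $P(P-1)$ internal edges precisely encode the directed length-two cycles and bipaths among nodes of $[P]$, and showing that their combined contribution is only $O(N^{-1})$ is the local manifestation of the LTI property in this setting. Additionally, the $O(N^{-2})$ residue picked up when expanding the product over $v\in[P]$ must be controlled uniformly in $w\notin[P]$ so that it survives being raised to the $(N-P)$-th power without destroying the $L^2$ convergence; the uniform boundedness of $\hat f_{\bar\Omega}$ together with the finiteness of $\kappa$ on $[M]^2$ provide the needed uniform estimates.
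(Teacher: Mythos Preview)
Your proposal is correct and follows essentially the same approach as the paper: factorize the conditional characteristic function using the IID structure, reduce each factor to $1+(N-1)^{-1}g(k)$, and apply Lemma~\ref{limitlemma} with $y=(N-1)^{-1}$. Your treatment of Part~2 is in fact more detailed than the paper's, which simply asserts that ``the same proof'' yields the factorized joint CF up to $1+O(N^{-1})$; your explicit separation into internal and external edges and the uniform control of the $O(N^{-2})$ cross-terms is exactly what is needed to justify that assertion.
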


\begin{proof} By the conditional independence of the factors, we have an exact formula valid for finite $N$:
\beq
\hat f^{(N)}_\ID(k\mid T)&=&\prod_{w\ne 1}\BBE^{(N)}[1+ I_{w1}(e^{ik\bar\Omega_{w1}}-1)\mid T]\label{fXN}\\
&=&\left(1+\sum_{T'}\BBP(T')\frac{\kappa(T',T)}{N-1}\bigl(\hat f_{\Omega}(k\mid T',T)-1\bigr)\right)^{N-1}\nonumber
\eeq
Now, by applying Lemma \ref{limitlemma} to $\log  \hat f^{(N)}_\ID$ with $N-1=y^{-1}$ and\\ $g(k,y)=\sum_{T'}\BBP(T')\kappa(T',T)(\hat f_{\Omega}(k\mid T',T)-1)$, the limit  in $L_2[0,\infty)$ is 
\[\log \hat f^{(N)}_\ID(k\mid T)=\log\hat f_X(k\mid T)+O(N^{-1})\]
where $\hat f_X(k\mid T)$ is as stated above.
\end{proof}

\begin{remark} Comparison of equation \eqref{hatfX} to the L\'evy-Khintchin formula shows that $\bar X$ is a positive compound Poisson random variable with a continuous jump measure $\dee \mu_X(\cdot\mid T)$ on $\BBR^+$:
\beq \hat f_X(k\mid T)&=&\exp\left[\int^\infty_0 [e^{iku}-1]\mu_X(u\mid T)\dee u\right]\ ,\\
\mu_X(u\mid T)&=& \sum_{T'}\BBP(T')\ \kappa(T',T)\ \rho_\Omega(u\mid T',T)\ .\eeq
It follows that the unconditional distribution of $\bar\ID_v$ is a {\em mixture} over $T_v$ of compound Poisson random variables, with mixing distribution $\BBP(T_v)$, including a positive probability\\ $\sum_T\BBP(T)e^{-\int^\infty_0\mu_X(u\mid T)\dee u}$ for $X=0$. 
\end{remark}

For part (2), note that the same proof implies that the joint conditional CF of $\bar\ID_1,\bar\ID_2$  two banks will be given by 
\[\BBE^{(N)}(e^{ik_1\bar\ID_1}e^{ik_2\bar\ID_2}\mid T_1,T_2)=\hat f^{(N)}_\ID(k_1\mid T_1)\hat f^{(N)}_\ID(k_2\mid T_2) \left(1+O(N^{-1})\right)\ .\]
Similarly, for the joint conditional CF for any finite collection of banks. 

\subsection{The Galton-Watson Process}\label{GW} 
The result in Proposition 2 on the large $N$ asymptotic degree distribution reflects the general principle discussed in \cite{Bordenave16} that a sequence of locally tree-like networks such as an IRG is always ``locally weakly convergent'' to a collection of connected Galton-Watson (GW) random trees. This has a well-defined meaning that the collection of nodes that can be reached from a given node of type $T$ by following directed edges has the approximate structure of a branching process. We can interpret Proposition 2 as implying that the number of nodes of type $T'$ that can be reached along single directed edges rooted at any node $v$ with type $T$ is a Poisson random variable $X_{T',T}$ with mean parameter  $\BBP(T')\kappa(T,T')$.  Each subsequent one-step extension has the same underlying distribution, defining the branching process.  These facts identify the offspring distribution of any node, conditioned on its type. 
 
It is not hard to deduce that $Z_{n,T',T}$, the number of $n$ step directed paths rooted at the node $v=1$ with type $T$ and terminating in a  node of type $T'$, is a random variable that will follow the recursion formula
 \be\label{GWrecursion} Z_{n,T',T}=\sum_{T''\in[M]} \sum_{i=1}^{Z_{n-1,T'',T}} X_{T',T'',i}\ , n>1\ee
 with $Z_{1,T',T}\sim X_{T',T,1}$. Here $\{Z_{n-1,T'',T}, X_{T',T'',i}\}_{i\in\BBZ_+}$ is a mutually independent collection of random variables and each $X_{T'',T,i}$ is identically $\Pois(\BBP(T'')\kappa(T,T''))$. 
 
We now provide a multi-type extension of the discussion of branching processes found in \cite{Hurd16a}[Section 4.1]. Let ${\bf G}=(G_1,\dots, G_M):[0,1]^M\to[0,1]^M$ denote the following probability generating function for the identically distributed multi-variate random variables  $X_{T',T,i}$: for  ${\bf a}=(a_1,\dots, a_M)$ and $T\in[M]$  the $T$-component of $\bf G$ is defined to be
\be G_T({\bf a})=\BBE \prod_{T'=1}^M (a_{T'})^{X_{T',T}}=\exp[\sum_{T'} \BBP(T')\kappa(T,T')(a_{T'}-1)] .\ee
From  the GW recursion \eqref{GWrecursion}, one can verify that the probability generating functions ${\bf H}_{n}=(H_{n,1},\dots, H_{n,M}):[0,1]^M\to[0,1]^M$ for the multi-variate random variables  $Z_{n,T',T}$ defined by 
\be H_{n,T}({\bf a})=\BBE \prod_{T'=1}^M (a_{T'})^{Z_{n,T',T}}\ee
are given by the composition 
\be {\bf H}_{n}= {\bf H}_{n-1}\circ  {\bf G}=\underbrace{{\bf G}\circ {\bf G}\cdots \circ {\bf G}}_{n\ {\rm factors}}\ . \ee

The extinction probabilities for this GW process forms a vector $\xi=(\xi_T)_{T\in[M]}$ where $\xi_T:=\BBP[\ \exists\ n: Z_{n,T',T}=0\ \forall\ T']$. For each $n$, define $\xi_{n,T}=\BBP[Z_{n,T',T}=0\ \forall\ T']=H_{n,T}({\bf 0})$. Since $ Z_{n-1,T',T}=0 \ \forall \ T',T$ implies $Z_{n,T',T}=0\ \forall\ T',T$,  the sequence $\xi_{n}=(\xi_{n,1},\xi_{n,M})$ is increasing, bounded  and therefore converges to some value $\xi\in [0,1]^M$. Since $\xi_n={\bf H}_n({\bf 0})$, 
\[\xi_n={\bf G}({\bf H}_{n-1}({\bf 0}))={\bf G}(\xi_{n-1})\ .
\]
Note also that ${\bf G}$ has ${\bf G}(\One)=\One$ and is continuous and increasing  on $[0,1]^M$. Therefore, by continuity,
\[\xi=\lim_{n\to\infty}\ \xi_n = \lim_{n\to\infty}{\bf G}( \xi_{n-1})={\bf G}\left(\lim_{n\to\infty} \xi_{n-1}\right)={\bf G}(\xi)\ ,
\]
so $\xi\in[0,1]^M$ is a fixed point, which we can also see is the least fixed point, of ${\bf G}$. Since ${\bf G}$ is strictly convex everywhere, it can have at most two fixed points on the lattice $[0,1]^M$.  
From this discussion, one can deduce that a node with type $T$ will have an infinite number of nodes in its forward cluster with probability $1-\xi_T$, which will be non-zero for some $T$ if the gradient $\nabla {\bf G}$ at $\xi=\One$ has its maximal eigenvalue greater than $1$.  

This recaps the main result of percolation theory that the existence or not of an infinite connected cluster is directly related to whether or not the extinction probability vector is $\xi<\One$ or not. Since $\xi$ is the least fixed point of $\bf G$, this amounts to the existence or not of a non-trivial fixed point of the analytic function ${\bf G}:[0,1]^M\to[0,1]^M$. Our object now is to define how financial crises can be modelled as cascades on random financial networks. Percolation theory, as an abstract exploration of network connectivity, is a guide to understanding the susceptibility of such financial networks to cascades.

\section{Default Cascades on IRFNs}

The IRFN framework specifies the distributions of the random variables $T,I,{\bar {\rm B}},\bar\Omega$ just introduced. It provides a compact stochastic representation of the state of a given real world network of $N$ banks at a moment in time prior to a crisis. With the same distributional data, we can consider this as an element of a sequence of networks by varying $N$ and taking $N\to\infty$.   We now want to consider  how such networks will respond when a {\em trigger event} at time $t=0$ moves the pre-trigger balance sheets $\bar {\rm B}=[\bar{\rm A},\bar \Xi,\bar \Delta]$ to the post-trigger balance sheets ${\rm B}^{(0)}=\bar {\rm B}+\delta {\rm B}$ (recall we assume $\Omega^{(0)}=\bar\Omega$). 

Cascade mechanisms (CMs) are stylized behaviours that banks are assumed to follow when they become aware that a crisis has been triggered. These behaviours are highly non-linear, to reflect that during a crisis banks will take emergency or remedial actions, and in the worst case of bankruptcy be taken over by a {\em system regulator}. ``Business as usual'', in which banks react smoothly to small changes as they pursue profits, is  not applicable during the crisis. Instead we assume healthy banks that are solvent and liquid adopt a ``do nothing/wait and see'' crisis management strategy, while weak banks' behaviour may be forced or severely constrained by the regulatory structure. From a systemic perspective, cascades can arise when weak banks' behaviour have negative impact on other banks.  

\cite{Hurd18} provides an overview of some of the important cascade channels that model the forced behaviour of banks when their buffers fall below certain thresholds. For example,  {\em funding liquidity cascades} arise when banks experience withdrawals by depositors or wholesale lenders. After $n$ steps of the cascade their impacted cash buffers will be $\Xi^{(n)}_v=\Xi^{(0)}_v-\sum_w \tilde S^{(n-1)}_{wv}$ where $\tilde S^{(n-1)}_{wv}$ denotes the {\em  liquidity shock} transmitted from bank $w$ hitting bank $v$. This section focuses instead on {\em solvency cascades}, which turn out to have the same mathematical structure as funding liquidity cascades. In this channel the most relevant buffer variable is the impacted solvency buffer after $n$ cascade steps $\Delta^{(n)}_v=\Delta^{(0)}_v-\sum_w S^{(n-1)}_{wv}$, where $S^{(n-1)}_{wv}$ denotes the {\em solvency shock} from $w$ to $v$. 

\subsection{Default Cascade Mechanisms}
We now consider a class of  models generalizing the clearing model for defaulted debt of \cite{EiseNoe01}. The original EN model assumes that no bankruptcy charges are paid when a firm defaults, ruling out a dangerous contagion amplification mechanism. More realistically, bankruptcy charges and frictions will likely amount to a substantial effective cut of the firm's value at its default. \citet{RogeVera13} extend the EN model in this direction by assuming that bankruptcy costs given default are linear in the endowment and the recovery value of interbank assets. In their model, the recovery value is discontinuous in buffer variables at the solvency threshold, creating an effectively infinite shock amplification effect at this ``hard threshold''. In contrast, we make a  ``soft threshold'' assumption  where the recovery fraction on interbank debt is a continuous piecewise linear function of the level of insolvency. 

Partial recovery of the notional value of the defaulted bank's assets will therefore be assumed to be distributed amongst creditors according to their seniority.  Banks are assumed to have balance sheets as in Table \ref{basicbalancesheet2}, and to be insolvent (bankrupt) if and only if $\Delta<0$.    

\begin{assumption}[Fractional Recovery] For each bank,  \begin{enumerate}
  \item External debt $\ED$ is senior to interbank debt $\ID$ and all interbank debt is of equal seniority;
  \item Bankruptcy charges are in proportion to the negative part of the impacted solvency buffer. 
\end{enumerate} 
\end{assumption}
Thus there is a fixed parameter $\lambda\in(0,1]$ assumed to be the same for all banks, such that at step $n$ of the cascade  
\be\label{bankruptcycosts}\mbox{bankruptcy costs} = (1/\lambda-1)\max(-\Delta^{(n)},0)\ . 
\ee 
This assumption implies that the amount available to repay all debtors of a defaulted firm is $\overline{\rm TA}-(1/\lambda-1)\max(-\Delta^{(n)},0)$. It means that as soon as $\Delta^{(n)}_v\le -\lambda  \bar\ID_v$, the recovery fraction paid on defaulted interbank debt will be zero. 

In general, the {\em loss fraction on interbank debt} of each bank at step $n$ can be identified as the 
 {\em insolvency level} random variable defined by 
\be\label{solvency1}\CD^{(n)}_v=g_\lambda\left(\frac{\Delta^{(n)}_v}{\bar \ID_v}\right),\quad g_\lambda(x):=\min\left(1,\max(-x/\lambda,0)\right)\ .\ee 
The insolvency level of bank $w$ at step $n$ now influences the solvency shock transmitted to another bank $v$:
\be\label{stepn:1}
S^{(n)}_{wv}:=I_{wv}\bar \Omega_{wv}\CD^{(n)}_w\ ,\ee
 the aggregated solvency shock transmitted to $v$:
\be \label{stepn:2}S^{(n)}_{v}:= \sum_{w\ne v} S^{(n)}_{wv}\ ,\ee
and finally,   the solvency buffer of $v$ at the end of step $n$:
\be\label{impactedsolvency2}
\Delta^{(n+1)}_v=\Delta^{(0)}_v-\sum_w S^{(n)}_{wv}\ .
\ee 
Putting (\ref{solvency1},\ \ref{stepn:1},\ \ref{stepn:2},\ \ref{impactedsolvency2}) together gives the complete solvency cascade mapping at step $n\ge 0$.

\subsection{The First Cascade Step}

Consider \eqref{stepn:1} for $n=0$ defining the single shock $S^{(0)}_{21}=I_{21}\bar \Omega_{21}\CD^{(0)}_2$ transmitted from 2 to 1 for two typical banks $1,2$. Then $S^{(0)}_{21}=G_\lambda(X,Y,Z)$ where the {\em shock transmission function} \be\label{Glambda} G_\lambda(x,y,z)=zg_\lambda(y/(x+z)), \ g_\lambda(u)=\min(1,\max(-u/\lambda,0))\ .\ee
depends on the independent random variables $X= X_{2\backslash 1}:=\sum_{w\ne 1, 2} I_{2w}\bar\Omega_{2w}, Y:=\Delta^{(0)}_2, Z:= I_{21}\bar\Omega_{21}$. The next proposition shows that the characteristic function of $S^{(0)}_{21}$ for finite $N$ can be expressed  in terms of 
 \be\label{RNdef} R^{(N)}(k, k'\mid T,T'):=\frac1{2\pi}\int^0_{-\infty} e^{ik'y} \BBE^{(N)}[e^{ikG_\lambda(X,Y,\bar\Omega_{21})}-1\mid Y=y, T_1=T,T_2=T']\ \dee y\ ,
\ee
 the Fourier transform of the conditional characteristic function of a particular random variable related to $S^{(0)}_{21}$.

\begin{proposition} The  characteristic function of the  {\em solvency shock} $S^{(0)}_{21}$ transmitted from bank $2$ to bank $1$ in step $0$, conditioned on the types $T_1=T, T_2=T'$, is given for finite $N$ by
 \be \BBE^{(N)}[e^{ikS^{(0)}_{21}}\mid T,T']=1+\frac{\kappa(T',T)}{N-1} \int^\infty_{-\infty}\ \hat f^{(0)}_\Delta(k'\mid T') \ R^{(N)}(k, k'\mid T,T')\ \dee k'\ .
 \label{shock1}\ee

\end{proposition}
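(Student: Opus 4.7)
The plan is to decompose the expectation by conditioning on the edge indicator $I_{21}$, use the conditional-independence structure of Assumption 2 under $\sigma(T)$, and then apply Fourier inversion in the solvency-buffer variable $\Delta^{(0)}_2$.

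First, I would split on the Bernoulli $I_{21}$. Since $S^{(0)}_{21}=I_{21}\bar\Omega_{21}\CD^{(0)}_2$ vanishes identically on $\{I_{21}=0\}$, Assumption 1 together with the tower property yields
\[ \BBE^{(N)}[e^{ikS^{(0)}_{21}}\mid T,T'] = \left(1-\tfrac{\kappa(T',T)}{N-1}\right) + \tfrac{\kappa(T',T)}{N-1}\,\BBE^{(N)}[e^{ikG_\lambda(X,Y,\bar\Omega_{21})}\mid T,T',\,I_{21}=1], \]
with the abbreviations $X=X_{2\backslash 1}$ and $Y=\Delta^{(0)}_2$. Because Assumption 2 makes $\bar\Omega_{21}$ independent of $\sigma(I)$ conditional on $\sigma(T)$, conditioning on $I_{21}=1$ does not alter its distribution, and the identity simplifies to
\[ 1 + \tfrac{\kappa(T',T)}{N-1}\,\BBE^{(N)}[e^{ikG_\lambda(X,Y,\bar\Omega_{21})}-1\mid T,T'].\]

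Next, I would reduce the remaining conditional expectation to an integral over $Y$. Setting $h(y):=\BBE^{(N)}[e^{ikG_\lambda(X,Y,\bar\Omega_{21})}-1\mid Y=y,T,T']$ and using the conditional independence of $X$, $Y$, $\bar\Omega_{21}$ given types, the expectation becomes $\int_{\BBR} h(y)\,\rho^{(0)}_\Delta(y\mid T')\,\dee y$. Inspection of the definition of $G_\lambda$ in \eqref{Glambda} shows that $g_\lambda(y/(x+z))=0$ whenever $y\ge 0$ and $x+z>0$, so $h(y)\equiv 0$ on $[0,\infty)$, which is precisely why the integral defining $R^{(N)}$ in \eqref{RNdef} is truncated to $(-\infty,0)$.

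Finally, I would apply Fourier inversion. Writing $\rho^{(0)}_\Delta(y\mid T')$ as the inverse Fourier transform of $\hat f^{(0)}_\Delta(\cdot\mid T')$ and interchanging the order of integration via Fubini, I obtain
\[ \int_{\BBR} h(y)\,\rho^{(0)}_\Delta(y\mid T')\,\dee y = \int_{\BBR} \hat f^{(0)}_\Delta(k'\mid T')\,R^{(N)}(k,k'\mid T,T')\,\dee k',\]
the $1/(2\pi)$ factor being absorbed into the definition of $R^{(N)}$. Combined with the earlier simplification, this produces \eqref{shock1}.

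The main obstacle is the justification of the Fubini/Fourier-inversion step: although $|h|\le 2$ and $h$ is supported on $(-\infty,0)$ where $\rho^{(0)}_\Delta(\cdot\mid T')$ is integrable, $h$ itself need not lie in $L^1(\BBR)$ without some regularity of the conditional density, so the interchange should probably be justified via the classical transfer identity $\int h\rho=\int \hat h\,\hat f$ (valid whenever both sides make sense, e.g.\ by approximating $\rho$ by a Schwartz function). Consistency of sign and normalization conventions between $\hat f$ and the Fourier transform inside $R^{(N)}$ must also be tracked carefully. The first two steps are direct consequences of the IRFN assumptions.
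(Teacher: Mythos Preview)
Your proposal is correct and follows essentially the same route as the paper. The paper's proof uses the algebraic identity $e^{ikG(X,Y,I_{21}\bar\Omega_{21})}=1+I_{21}(e^{ikG(X,Y,\bar\Omega_{21})}-1)$ (equivalent to your split on $I_{21}$), restricts to $\{Y<0\}$ via $G(x,y,z)\One(y\ge 0)=0$, integrates against $\rho^{(0)}_\Delta$, and then names the final step the Parseval--Plancherel identity---which is precisely the ``transfer identity'' $\int h\rho=\int \hat h\,\hat f$ you single out in your closing paragraph as the correct justification.
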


\begin{proof} The proof works for any bounded shock transmission function $G$ such that $G(x,y,0)=0$ and $G(x,y,z)\One(y\ge 0)=0$.  Since $e^{ikG(X,Y,I_{21} \bar\Omega_{21})}=1+I_{21}(e^{ikG(X,Y, \bar\Omega_{21})}-1)$ and  $(e^{ikG(X,Y, \bar\Omega_{21})}-1)\One(Y\ge 0)=0$, 
 \beqq \BBE^{(N)}[e^{ikS^{(0)}_{21}}\mid T,T']&=&1+\BBE^{(N)}[I_{21}(e^{ikG(X,Y,\bar\Omega_{21})}-1)\ \One(Y<0)\mid T,T']\\
&&\hspace{-1.2in}=\ 1+ \frac{\kappa(T',T)}{N-1}\int^0_{-\infty} \rho^{(0)}_\Delta(y|T')\BBE^{(N)}[e^{ikG(X,Y,\bar\Omega_{21})}-1\mid Y=y, T,T']\ \dee y
 \eeqq
 which by \eqref{RNdef} and  the Parseval-Plancherel identity in Fourier analysis yields  the required result \eqref{shock1}.  
 \end{proof}
 
While the proposition provides an abstract characterization of the result  for  general shock transmission functions $G(X,Y,Z)$, we will also need explicit integral formulas for the specific function $G_\lambda$ given by \eqref{Glambda}. Fix $a=-y/\lambda$ and define 
 \[ R^{(N)}(k,a):=\BBE^{(N)}[e^{ikG_\lambda(X,Y,\bar\Omega_{21})}-1\mid Y=-a\lambda, T,T']\ ,\]
 and note that $R^{(N)}(k,a)=0$ for $a\le 0$. For $a>0$,  taking into account that $\BBP^{(N)}(X=0)\ne 0$, we can write
 \beqq R^{(N)}(k,a)&=& \BBP^{(N)}(X=0)\int_{\BBR_+}\rho_\Omega(z) (e^{ikG_\lambda(0,-a\lambda,z)}-1)\ \dee z\\
 &&+\iint_{\BBR^2_+}\rho^{(N)}_X(x)\rho_\Omega(z) (e^{ikG_\lambda(x,-a\lambda,z)}-1)\ \dee x\ \dee z\eeqq
where $\int_{\BBR_+}\rho^{(N)}_X(x)\dee x=1-\BBP^{(N)}(X=0)$.  In the double integral we change integration variables to $(x,u)\in\BBR_+\times[0,a]$. This has the inverse transformation \[(x,z)=\Bigl(x, u\One(x\le a-u)+\frac{ux}{a-u}\One(x>a-u)\Bigr)\ . \]
 After some manipulation, this gives the  formula
  \beq R^{(N)}(k,a)&=&\BBP^{(N)}(X=0)\BBP(\bar\Omega_{21}>a)(e^{ika}-1)\nonumber\\
  &+&\ \int^a_0\BBP^{(N)}(X\in[0, a-u))\rho_\Omega(u) (e^{iku}-1)\ \dee u\nonumber\\
  &+&\ \int^a_0\left[ \int^\infty_{a-u} \rho_\Omega\Bigl(\frac{ux}{a-u}\Bigr)\frac{ax}{(a-u)^2}\rho^{(N)}_X(x)\  \dee x\right]\ (e^{iku}-1)\ \dee u\ . \label{Rformula}\eeq
Finally, we note that the distribution of $X=X_{2\backslash 1}$ for any $N$ can be computed using Proposition 3 with a replacement of $N$ by $N-1$, which leads to an explicit multi-dimensional integral for $R^{(N)}(k, k'\mid T,T')$.

We next consider the asymptotic distribution of the total solvency shock $S^{(0)}_1:=\sum_{w\ne 1} S^{(0)}_{w1}$ transmitted to bank $1$ in step $0$. By a slight generalization of Lemma \ref{limitlemma}, one can argue that any {\em finite collection} of shocks $\{S^{(0)}_{w1}\}_{w\ne 1}$ are identical, and asymptotically independent, conditioned on the type $T_1=T$. However, this fact cannot prove the following plausible statement:
\beqq \BBE^{(N)}[e^{ikS^{(0)}_{1}}\mid T]&=& \BBE^{(N)}\left[\prod_{w\ne 1}e^{ikS^{(0)}_{w1}}\mid T\right]\\&&\hspace{-1.7in}\sim\prod_{w\ne 1}\BBE^{(N)}[e^{ikS^{(0)}_{w1}}\mid T](1+O(N^{-1}))=\left(\sum_{T'}\BBP(T')\BBE^{(N)}[e^{ikS^{(0)}_{21}}\mid T,T']\right)^{N-1}(1+O(N^{-1}))
\eeqq
where $\sim$ represents the unproven step. Accepting this unproven step as true, and following the argument proving Proposition \ref{Prop1}  leads to the following conjecture:

\begin{conjecture}\label{Conj}
 The characteristic function of the {\em  total solvency shock} $S^{(0)}_{1}=\sum_{w\ne 1} S^{(0)}_{w1}$ transmitted to bank $1$ in step $0$, conditioned on the type $T_1=T$, has   the $N\to\infty$ limiting behaviour:    
  \beq \BBE^{(N)}[e^{ikS^{(0)}_{1}}\mid T]&=&\label{shock2}\hat f^{(0)}_{S}(k\mid T)(1+O(N^{-1}))\ ,\\
 \hat f^{(0)}_{S}(k\mid T) &:=&\exp \left(\sum_{T'}\BBP(T')\kappa(T',T) \int^\infty_{-\infty} \ \hat f^{(0)}_\Delta(k'\mid T')\ R(k, k'\mid T,T')\ \dee k'\right)\nonumber
 \eeq
  where the limit of the logarithm  is in $L^2[0,\infty)$. Here
  \[ R(k, k'\mid T,T'):=\frac1{2\pi}\int^0_{-\infty} e^{ik'y} R(k,-y/\lambda)\ \dee y
\]
where $R(k,a)$ is given by \eqref{Rformula} with $N=\infty$ and the conditions $T,T'$.
\end{conjecture}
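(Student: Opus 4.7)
The strategy is to follow the template of the proof of the preceding proposition on the interbank debt $\bar\ID_1$: substitute the single-shock formula of the proposition on $S^{(0)}_{21}$ into each factor, then pass to the large-$N$ limit using Lemma \ref{limitlemma}. The observation that simplifies the step marked $\sim$ in the paragraph preceding the conjecture is that $\{S^{(0)}_{w1}\}_{w\ne 1}$ are in fact \emph{exactly} conditionally independent given $\sigma(T)$, so that the $\sim$ can be replaced by equality. Indeed, each $S^{(0)}_{w1}=I_{w1}\bar\Omega_{w1}\,g_\lambda(\Delta^{(0)}_w/\bar\ID_w)$ is a function only of the balance sheet entry $\Delta^{(0)}_w$ and the family of out-edge variables $\{I_{wv},\bar\Omega_{wv}\}_{v\ne w}$ of bank $w$; for distinct $w\ne w'$ these out-edge families involve disjoint ordered pairs in the directed graph, and Assumptions 1 and 2 then guarantee mutual conditional independence across $w$.

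With this in hand, conditioning on $T_1=T$ and averaging the independent type variables $T_w$ against $\BBP(\cdot)$ yields the exact identity
$$\BBE^{(N)}[e^{ikS^{(0)}_1}\mid T_1=T]=\Bigl(\sum_{T'}\BBP(T')\,\BBE^{(N)}[e^{ikS^{(0)}_{21}}\mid T_1=T,T_2=T']\Bigr)^{N-1}.$$
Feeding in the single-shock proposition gives
$$\BBE^{(N)}[e^{ikS^{(0)}_{21}}\mid T,T']=1+\frac{\kappa(T',T)}{N-1}\int_{-\infty}^\infty \hat f^{(0)}_\Delta(k'\mid T')\,R^{(N)}(k,k'\mid T,T')\,dk',$$
and summing over $T'$ produces an expression of the form $(1+(N-1)^{-1}g(k,y))^{N-1}$ with $y=(N-1)^{-1}$ and
$$g(k,y)=\sum_{T'}\BBP(T')\kappa(T',T)\int_{-\infty}^\infty \hat f^{(0)}_\Delta(k'\mid T')\,R^{(N)}(k,k'\mid T,T')\,dk',$$
re-expressed as a function of $y$ through the correspondence $N-1=y^{-1}$. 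Invoking Lemma \ref{limitlemma} then produces the exponential form claimed in \eqref{shock2}, together with the stated $L^2$ convergence of the logarithm and the $O(N^{-1})$ error.

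The main technical obstacle lies in the last step: verifying that $R^{(N)}\to R$ with sufficient uniformity to (i) define the $y\to 0$ limit appearing as $g(k,0)$ and (ii) supply the pointwise boundedness and $L^2$-regularity hypotheses of Lemma \ref{limitlemma}. From \eqref{Rformula}, $R^{(N)}$ depends on $N$ only through the density $\rho^{(N)}_X$ and the atom $\BBP^{(N)}(X=0)$ of $X=X_{2\setminus 1}$, and the preceding proposition on interbank debt (applied with $N-1$ in place of $N$) already provides $L^2$ convergence of the characteristic function of $X_{2\setminus 1}$ to its compound Poisson limit. Transferring this to convergence of the integrals in \eqref{Rformula} is expected to proceed via Fourier inversion and dominated convergence, using the uniform bound $|R^{(N)}(k,a)|\le 2$ (since $|e^{ikG_\lambda}-1|\le 2$) together with the integrability of $\hat f^{(0)}_\Delta(k'\mid T')$ as a dominating envelope in $k'$. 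This is the step I expect to require the most care, because $\rho_X$ need not be smooth and the factor $\rho^{(N)}_X(ux/(a-u))\cdot ax/(a-u)^2$ inside \eqref{Rformula} introduces a singular weight as $u\uparrow a$; controlling this is precisely what makes the result conjectural rather than a straightforward corollary of the preceding propositions.
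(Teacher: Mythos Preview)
The statement you are addressing is explicitly a \emph{conjecture} in the paper, not a proved result; the authors identify the step marked $\sim$ as unproven and formulate the asymptotic formula only as a plausible consequence. Your proposal claims to close this gap by asserting that $\{S^{(0)}_{w1}\}_{w\ne 1}$ are exactly conditionally independent given $\sigma(T)$ and that this justifies the exact product identity conditioned on $T_1=T$ alone. The first assertion is correct; the second does not follow, and this is the genuine gap.

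Conditional independence given $\sigma(T)$ yields
\[
\BBE^{(N)}\Bigl[\prod_{w\ne 1}e^{ikS^{(0)}_{w1}}\ \Big|\ T_1=T\Bigr]
=\BBE^{(N)}\Bigl[\prod_{w\ne 1}\BBE^{(N)}\bigl[e^{ikS^{(0)}_{w1}}\mid\sigma(T)\bigr]\ \Big|\ T_1=T\Bigr].
\]
However, each inner factor $\BBE^{(N)}[e^{ikS^{(0)}_{w1}}\mid\sigma(T)]$ is \emph{not} a function of $(T_1,T_w)$ alone: it depends on the entire type vector through $X_{w\backslash 1}=\sum_{v\ne 1,w}I_{wv}\bar\Omega_{wv}$, whose conditional law given $\sigma(T)$ involves every $T_v$ (via $\kappa(T_w,T_v)$ and $F_\Omega(\cdot\mid T_w,T_v)$). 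For distinct $w,w'$ the factors therefore share dependence on all $T_v$ with $v\notin\{1,w,w'\}$, and the outer expectation over the i.i.d.\ types $T_2,\dots,T_N$ does not split as a product. This is precisely where your template breaks: in the interbank-debt proposition the summand $I_{w1}\bar\Omega_{w1}$ has conditional law depending only on $(T_w,T_1)$, so the factors \emph{are} independent once $T_1$ is fixed and the $(N-1)$st power identity is exact. Here that structure fails, and your displayed ``exact identity'' is unjustified. The quantity $\BBE^{(N)}[e^{ikS^{(0)}_{21}}\mid T_1=T,T_2=T']$ from Proposition~4 has already averaged out $T_3,\dots,T_N$ and cannot be inserted factor by factor.

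Your final paragraph places the main technical burden on controlling $R^{(N)}\to R$. That convergence does require care, but it is not the crux: the obstruction the paper flags is the factorization step $\sim$ itself, which your argument does not resolve. The paper's heuristic support for the conjecture comes instead from the LTI property of the cascade mechanism on tree skeletons combined with local weak convergence of the IRG to Galton--Watson trees, not from any exact finite-$N$ independence.
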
  

An interpretation of this conjecture based on the formula \eqref{Rformula} is that the solvency shock hitting bank $v=1$ in step $0$ is a non-negative L\'evy distributed random variable. Moreover, the jump measure is a specific {\em non-linear} convolution of the three component probability density functions. 

On the right side of  equation \eqref{impactedsolvency2} for the impacted default buffer  $\Delta^{(1)}_1=\Delta^{(0)}_1-S^{(0)}_1$ at the end of step $0$ we see directly that  $S^{(0)}_1$ and $\Delta^{(0)}_1$ share no common balance sheet random variables, and are therefore independent conditionally on the type $T$ of bank $1$. From the multiplicative property of characteristic functions of sums of independent random variables, the impacted default buffer  $\Delta^{(1)}_1$ has the product conditional characteristic function 
\be\label{newdelta} \hat f^{(1)}_{\Delta}(k\mid T) =\hat f^{(0)}_{\Delta}(k\mid T) \hat f^{(0)}_{S}(-k\mid T)\ .
\ee

In summary,  step $0$ of the solvency cascade mapping has been broken down into three substeps that capture the probabilistic implications of equations \eqref{stepn:1}-\eqref{impactedsolvency2}. 
Each of these substeps depends on the initial conditional distributional data for the collection $\{T_v, I_{vw}, \bar\Omega_{vw},\Delta^{(0)}_v\}$, combined with a conditional independence assumption. The result of the mapping is full conditional univariate distributional data for the collection $\{\Delta^{(1)}_v\}$.  

\subsection{LTI Cascade Mechanisms}
It turns out that Conjecture \ref{Conj} is  understandable from a different perspective if we consider the solvency cascade mapping on an IRFN, under the condition that the skeleton graph is a non-random, connected, directed tree. This alternative line of thinking is motivated because the skeleton of an IRFN is an IRG, which we have observed will always converge ``locally weakly'' to a random ensemble of connected components which are {\em trees}. We will now prove that the solvency cascade mapping on a skeleton which is  a non-random, connected, directed tree has a nice property we call {\em locally tree-like independent}.  

Let $([N],\CE)$ denote the nodes and edges of such a tree skeleton, with node types labelled. Being a connected, directed tree, there is a partial ordering $\ge, >$ generated by the relationships $w>(wv)>v$ whenever $(wv)\in\CE$.  Every element of $([N],\CE)$ is connected to a fixed node $w$ by a unique path, whose final edge is either into or out of $w$. For any collection $A$ of nodes and edges, we denote by $\sigma(A)$ the sigma-algebra generated by the collection of random variables $\{ \Delta^{(0)}_u, \bar\Omega_{wv}\}_{u,(wv)\in A}$. When $([N],\CE)$ arises from an IRFN on a tree, this is always a mutually independent collection. If $A$,$B$ are disjoint subsets, then $\sigma(A)$ and $\sigma(B)$ are always independent. Now, for each $u\in[N]$  and $(wv)\in\CE$ we define some natural collections of random variables and their sigma-algebras.  
\begin{enumerate}
  \item $\CM^-_u$: the subset of $([N],\CE)$ whose elements are each connected to $u$ by a  path whose final edge is directed into $u$.
   \item $\CM^+_u$: the subset of $([N],\CE)$ whose elements are each  connected to $u$ by a  path whose final edge is directed out of $u$. 
  \item $\CM^-_{v\backslash w}$: the subset of $([N],\CE)$  whose elements are each connected to $v$ by a  path whose final edge is directed into $v$, but is not the edge $(wv)$.
   \item $\CM^+_{w\backslash v}$: the subset of $([N],\CE)$ whose elements are each connected to $w$ by a  path whose final edge is directed out of  $w$, but is not the edge $(wv)$.
\end{enumerate}
Note that the following are disjoint unions for all $u\in[N], (wv)\in\CE$:
\be\label{disjoint}   [N]\cup\CE=\CM^-_u\cup\CM^+_u\cup\{u\}\ ;\quad 
\CM^-_v=\CM^-_{v\backslash w}\cup\{(wv)\}\cup\{w\}\cup\CM^+_{w\backslash v}\cup \CM^-_{w}\ .\ee

\begin{definition}
A solvency cascade mechanism has the  {\em  locally tree-like independent property} if, conditioned on the skeleton being a non-random, connected, directed tree,
$\Delta^{(n)}_{v}$ is $\sigma(\CM^-_v\cup\{v\})$-measurable  for all $n\ge 0$ and $v\in[N]$.
\end{definition}

Based on the independence of $\sigma(A)$ and $\sigma(B)$ whenever $A$,$B$ are disjoint subsets, we can  prove the LTI property of the EN solvency cascade mechanism with fractional recovery.
\begin{proposition}(LTI property of the solvency cascade mechanism) Consider an IRFN conditioned on a skeleton graph $([N],\CE)$ (or equivalently the realized random variables $T,I$) that is a non-random, connected, directed tree. Then the solvency cascade defined by  \eqref{stepn:1}-\eqref{impactedsolvency2}  for any parameter $\lambda\in[0,1]$ is such that for all $n\ge 0$ and $(wv)\in\CE$,
\begin{enumerate}
   \item $\Delta^{(n)}_{v}$ is $\sigma(\CM^-_v\cup\{v\})$-measurable.
 \item $S^{(n)}_{wv}$ is $\sigma(\CM^-_w\cup\CM^+_{w\backslash v}\cup{\{w\}}\cup\{(wv)\})$-measurable.
  \item $S^{(n)}_{v}$ is $\sigma(\CM^-_v)$-measurable.
\end{enumerate}
\end{proposition}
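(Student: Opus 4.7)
The plan is to prove (1), (2), (3) simultaneously by induction on $n$, relying on the disjoint decompositions \eqref{disjoint} that hold because the conditioned skeleton is a tree. The base case of (1) at $n=0$ is immediate: $\Delta^{(0)}_v$ is one of the balance-sheet random variables attached to node $v$, so it is $\sigma(\{v\})$-measurable and therefore $\sigma(\CM^-_v\cup\{v\})$-measurable. The inductive core is a short loop in which, at each $n$, (1) at step $n$ delivers (2) at step $n$, which delivers (3) at step $n$, which delivers (1) at step $n+1$.

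For (1)$\Rightarrow$(2) at step $n$, I expand $S^{(n)}_{wv}=I_{wv}\bar\Omega_{wv}\,g_\lambda(\Delta^{(n)}_w/\bar\ID_w)$ together with $\bar\ID_w = \sum_u I_{wu}\bar\Omega_{wu}$. The latter sum runs over out-edges of $w$; on the tree one of these edges is $(wv)$ itself, while each of the remaining out-edges $(wu)$, being the initial edge of a path out of $w$ that does not use $(wv)$, lies together with $\bar\Omega_{wu}$ in $\CM^+_{w\backslash v}$. Combining this with the inductive hypothesis $\Delta^{(n)}_w\in\sigma(\CM^-_w\cup\{w\})$ and with $\bar\Omega_{wv}\in\sigma(\{(wv)\})$ places $S^{(n)}_{wv}$ in $\sigma(\CM^-_w\cup\{w\}\cup\{(wv)\}\cup\CM^+_{w\backslash v})$, which is (2). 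For (2)$\Rightarrow$(3), I write $S^{(n)}_v=\sum_{w:(wv)\in\CE}S^{(n)}_{wv}$ and invoke the second disjoint union in \eqref{disjoint}, which shows that each summand's sigma-algebra is a subset of $\sigma(\CM^-_v)$, so $S^{(n)}_v$ is as well. Finally, $\Delta^{(n+1)}_v=\Delta^{(0)}_v-S^{(n)}_v$ combines a $\sigma(\{v\})$-measurable term with a $\sigma(\CM^-_v)$-measurable term, yielding $\sigma(\CM^-_v\cup\{v\})$-measurability and closing the induction.

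The only step demanding real care is the combinatorial bookkeeping inside (1)$\Rightarrow$(2): one must verify that on a directed tree the out-edges of $w$ partition cleanly into $\{(wv)\}$ and a subset of $\CM^+_{w\backslash v}$, with no leakage into $\CM^-_w$ or across $(wv)$ into the subtree attached to $v$. This is where the tree hypothesis is used essentially, since the unique-path property is what makes the four pieces listed on the right of \eqref{disjoint} pairwise disjoint and prevents any out-edge of $w$ from reappearing in $\CM^-_w$ via some cycle. Once this piece of graph combinatorics is secured, the rest is a routine chase through the recursion \eqref{stepn:1}--\eqref{impactedsolvency2}. I note that independence of $\sigma(A)$ and $\sigma(B)$ for disjoint $A,B$ is not actually invoked in the proof of the measurability claims themselves; it is a separate consequence of the IRFN assumptions and is what makes the LTI property analytically useful downstream.
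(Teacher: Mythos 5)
Your proof is correct and follows essentially the same route as the paper: induction on $n$ with the loop (1)$\Rightarrow$(2)$\Rightarrow$(3)$\Rightarrow$(1), the same splitting of $\bar\ID_w$ into $\bar\Omega_{wv}$ plus the $\sigma(\CM^+_{w\backslash v})$-measurable remainder (which the paper packages as $S^{(k)}_{wv}=G_\lambda(\bar X_{w\backslash v},\Delta^{(k)}_w,\bar\Omega_{wv})$), and the same appeal to the second disjoint union in \eqref{disjoint} to land inside $\sigma(\CM^-_v)$. Your closing remark that independence of the sigma-algebras is not needed for the measurability claims themselves is also consistent with the paper's argument.
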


\begin{proof} First we note that for any $(wv)\in\CE$, $\bar X_{w\backslash v}$ is  $\CM^+_{w\backslash v}$-measurable. Next note that $\Delta^{(0)}_{v}$ is $\sigma({\{v\}})$-measurable. Now assume inductively that  $\Delta^{(n)}_{v}$ is $\sigma(\CM^-_v\cup{\{v\}})$-measurable for $n=k$ and all $v$. Then it follows that
\begin{enumerate}
\item For any $(wv)\in\CE$, 
$S^{(k)}_{wv}=G_\lambda(\bar X_{w\backslash v},\Delta^{(k)}_{w},\bar\Omega_{wv})$ which is $\sigma(\CM^+_{w\backslash v}\cup\CM^-_w\cup{\{w\}}\cup\{(wv)\})$-measurable, and hence $\sigma(\CM^-_v)$-measurable by \eqref{disjoint}. 
   \item $S^{(k)}_{v}=\sum_{\{w:(wv)\in\CE\}}\ S^{(k)}_{wv}$ is $\sigma(\CM^-_v)$-measurable.
  \item $\Delta^{(k+1)}_{v}=\Delta^{(0)}_{v}-S^{(k)}_{v}$ is $\sigma(\CM^-_v\cup\{v\})$-measurable.
\end{enumerate}
This verifies the inductive step for $n=k+1$, and hence the proposition is proven for all $n\ge 0$. 
\end{proof}

This proposition unravels the independence relationships across the entire family of balance sheet and exposure random variables that arise as the solvency cascade mapping is iterated. When the LTI property of the solvency cascade mechanism is combined with the fact that the infinite skeleton of an IRFN has components that are all random trees, it is not at all surprising that the large-$N$ asymptotics of the cascade mapping is consistent with Conjecture \ref{Conj}. Moreover, since the LTI property extends for any finite number of cascade steps, we have confidence to the extend the conjecture to all higher orders in the cascade.

\subsection{Higher Order Cascade Steps}

The proposed solvency cascade dynamics is given by iterates $n=0,1,2,\dots$  of the mapping from $\Delta^{(0)}$ to $\Delta^{(1)}$ defined above, assuming the conjectured $N=\infty$ asymptotic approximation.  This dynamics will take the probability distribution data for the collection $\{\Delta^{(n)}_v\}$ to probability distribution data for the collection $\{\Delta^{(n+1)}_v\}$. Given the distributional data for the collection $\{T_v, I_{vw},\bar \Omega_{vw},\Delta^{(n)}_v\}$, step $n$ of the full cascade is therefore generated by the following algorithm.

\bigskip\noindent
{\bf Cascade Mapping:\ } \begin{enumerate}
  \item To compute the univariate CF $\hat f_S^{(n)}(k\mid T)$  of the total solvency shock $S^{(n)}_{1}$, use \eqref{shock2} with $\hat f_\Delta^{(0)}$ replaced by $\hat f_\Delta^{(n)}$: 
   \be \BBE[e^{ikS^{(n)}_{1}}\mid T] =\exp \left(\sum_{T'}\BBP(T')\kappa(T',T) \int^\infty_{-\infty} \ \hat f_\Delta^{(n)}(k'\mid T')\ R(k, k'\mid T,T')\ \dee k'\right)\label{shock2n}
 \ee

   \item To compute the univariate distribution of the impacted default buffer  $\Delta^{(n+1)}_1=\Delta^{(0)}_1-S^{(n)}_1$ use the formula \eqref{newdelta}:
  \be\label{newdelta2} \hat f_\Delta^{(n+1)}(k\mid T) =\hat f_\Delta^{(0)}(k\mid T) \hat f_S^{(n)}(-k\mid T)\ .
\ee
\end{enumerate}

\subsection{Cascade Steps: Algorithmic Complexity}

A numerical implemention of the cascade mapping just described will require suitable truncation and discretization to approximate the integrals  in \eqref{shock2n} by finite sums.  In other words,  we need to find a suitable truncation parameter $L$ and discretization parameter  $\delta $ such that computing the function ${\bf R}(k,k'\mid T,T')=\BBP(T')\kappa(T',T) R(k,k'\mid T,T')$ for $k,k'$ on the grid $\Gamma=\delta\{-L+1/2,-L+3/2,\dots, L-3/2, L-1/2\}^2\subset \BBR^2$ provides sufficient accuracy. 
Note that ${\bf R}$, a square matrix with $2L\times M$ rows and columns, is only computed once for the entire cascade.

Given ${\bf R}$, the algorithm for each cascade step maps the  $2L\times M$ dimensional vector  $\hat f_\Delta^{(n)}$ to the exponential of a matrix product $\hat f_S^{(n)}=\exp[ {\bf R} * \hat f_{\Delta}^{(n)}]$, followed by a Hadamard (element-wise) product
$\hat f_{\Delta}^{(n+1)}=\diag(\hat f_{\Delta}^{(0)})*\hat f_S^{(n)} $. 

Thus the solvency cascade mapping admits a very compact specification in terms of the sequence of conditional characteristic functions, taken as vectors  $\hat f_{\Delta}^{(n)}:={\bf f}^{(n)}\in\BBC^{2L\times M}$, namely:
\be\label{compactCM} {\bf f}^{(n+1)}=\CC({\bf f}^{(n)}):=\diag({\bf f}^{(0)})*\exp[ {\bf R} *{\bf f}^{(n)}]\ .\ee
The nonlinear mapping $\CC:\BBC^{2L\times M}\to\BBC^{2L\times M}$ is parametrized by the solvency cascade kernel $ {\bf R} $ and the default buffer distribution ${\bf f}^{(0)}$, which, we can also note, must satisfy complex conjugation identities $\overline{ {\bf R} (k,k')}= {\bf R} (-k,-k')$ and $\overline{f^{(0)}(k)}=f^{(0)}(-k)$. A single cascade step is thus of order $O(L^2\times M^2)$ flops plus $2L\times M$ ordinary exponentiations.  
 In general, a {\em cascade equilibrium} is a fixed point ${\bf f}^*$ of the mapping, 
\[ {\bf f}^*=  \diag({\bf f}^{(0)})*\exp[ {\bf R} *{\bf f}^{*}]\ .\]

\section{Implementing IRFNs} Consider a generic banking network for some country that consists of $\hat N=\sum_{T\in[M]} \hat N_T$ banks classified into $M$ types labelled by $T\in[M]$,  where $\hat N_T$ denotes the number of banks of type $T$. Suppose the interconnectivity, exposures and balance sheets of the network have been observed monthly for the past $N_m=12$ months. Bank type can be assumed not to change, but the connectivity and balance sheets will fluctuate over the period. The aim here is to construct a sequence of IRFNs of size $N$ increasing to infinity, that is statistically consistent with the real world pre-crisis financial network when $N=\hat N$. Then the statistical model for $N=\infty$ can be subjected to crisis triggers with any type of initial shock $\delta {\rm B}$, and the resultant solvency cascade analytics developed in Section 3 will yield measures of the resilience of the real world network. 

For any of the monthly observations of the network, directed edges are drawn between any ordered pair $(v,w)$ of banks if the exposure of bank $w$ to bank $v$ exceeds a specified threshold (a ``significant exposure''). Let $\hat E=\sum_{T,T'} \hat E_{T,T'}$ be the total number of significant exposures in the network identified in the $N_m=12$ month historical database, decomposed into a sum over the bank types involved. For each  $T\to T'$ edge  $e\in[\hat E_{T,T'}]$ we observe the value $\Omega_e$; For each $v\in[N_m\times \hat N_T]$ we also observe samples $ {\rm B}_v$ of the type $T$ balance sheets. Our large $N$ IRFN will be calibrated to this data.

\subsection{Calibrating the Large N Model} \label{largeNnetwork}
The data described above leads to a natural calibration of the pre-trigger IRFN model for any value of $N$ (including $N=\infty$) at any time in the near future.
A bank $v$ randomly selected from the empirical distribution will have type $T$ with probability
\[ \widehat\BBP(T)=\frac{\hat N_T}{\hat N}\ .\]
Conditioned on $T_v=T$, its balance sheet $ {\rm B}_v=[\bar{\rm A}_v,\bar\Xi_v,\bar\Delta_v]$ will be drawn from the distribution whose {\em empirical characteristic function} is
\be\label{ECF2} \hat f_ {\rm B}({\bf u}\mid T)=\frac1{N_m\times \hat N_{T}}\sum_{v=1}^{N_m\times \hat N_{T}}e^{i{\bf u}\cdot {\bf B}_v}\ee
as a function of $\bf u\in \BBR^3_+$. 

A randomly selected pair of banks $e=(v,w), v\ne w$ with types $T,T'$  respectively will have a significant directed exposure, and hence a directed edge, with probability
\[  \widehat\kappa(T,T')=\frac{\hat E_{T,T'}}{N_m \hat N_T(\hat N_{T'}-\delta_{TT'})}\ .\]
where the matrix $\widehat\kappa$ is called the {\em empirical connection kernel}. Finally, for each ordered pair $T,T'$ we have $\hat E_{T,T'}$ observed significant exposures $\Omega_e$ from a $T$ bank to a $T'$ bank, leading to the empirical characteristic function
\be\label{ECF1} \hat f_\Omega(u\mid T,T')=\frac1{\hat E_{T,T'}}\sum_{e=1}^{\hat E_{T,T'}}e^{iu\Omega_e}\ .\ee

Solvency cascade computations involve integrals over the $u$-variables, which must be approximated by finite sums obtained by truncation and discretization. This will lead essentially to the Fast Fourier Transform, which amounts to choosing a suitably small discretization parameter $\delta$ and  large truncation value $\delta L$ and computing each occurrence of \eqref{ECF1} for the finite lattice $u\in \delta\{-L+1/2, -L+3/2,\dots, L-3/2, L-1/2\}$.

 The increasing sequence of random IRFN models based on these empirical probability distributions is intended to capture essential aspects of systemic risk in our specific finite real world network. For this to be true, a necessary condition to be verified will be that the  $N=\infty$ solvency cascade analytics should also provide a reasonably accurate approximation to simulation results for finite $N$.

\subsection{Parametrization Issues} There are several issues that need to be addressed by extensive experimentation when implementing such a scheme. \begin{enumerate}
  \item Network sparsity: What is the best threshold for defining ``significant exposures''?  There is a tradeoff between increasing the connectivity (reducing sparseness) and the cost of ignoring small exposures:  It has been argued that only ``large exposures'' are important in SR.   Computational burden is not sensitive to the exposure threshold.
  \item How many types of banks is ideal? Again, there is a tradeoff.  Taking $M$ sufficiently large is important because this is the parameter that determines how realistically the network correlation can be modelled.  However, note that the computational burden increases and the power of the statistical estimation decreases with the number of types. 
  \item How large must $N$ be chosen so that the asymptotic analysis is a good approximation? Likely, the accuracy of the large $N$ approximation will deteriorate as the number of types increases. How sensitive is the accuracy of the LTI approximation (which relies to some extent on the sparsity of the network) to the choice of exposure threshold? 
  \item Where can one obtain the data required to calibrate IRFN models? Exposure data with identified counterparties is never publicly available, and currently is often not available even to regulators. So finding real world network data is a serious impediment to implementing any kind of financial network model. 
\end{enumerate}

\subsection{Numerical Experiments} We have described how to implement a generic IRFN solvency cascade model from the point of view of someone with access to complete counterparty-counterparty exposure data. Since such detailed data rarely exists, and is never publicly available, a practical way to gain understanding of the IRFN is to follow the above implementation method for simulated network data. For example, \cite{HurdGleeMeln17}[Section 3.2] investigates zero-recovery solvency cascades in a stylized configuration graph random network with three bank types, that mimics certain characteristics of the US financial network. One can simulate such a model with $N=4000$ banks (roughly the current number of banks in the US), over $N_m=12$ months, and follow the calibration method to match the resulting ``synthetic'' network data to the IRFN framework. Since the model of \cite{HurdGleeMeln17}[Section 3.2] is much simpler than, but not a special case of, the IFRN framework, it is of interest to investigate how features observed in the simple model evolve and change when the new structural elements of the IRFN framework, such as fractional recovery and random exposures, are included. Such simulation-based experiments are easily accessible, and certainly merit future investigation as a testing framework for researchers in systemic risk modelling. 

\section{Conclusion}

This paper concerns itself only with general definitions, characteristics and properties of the IRFN cascade framework. Although the framework is designed to address any of a wide range of systemic risk effects for a wide range of real world financial networks, no attempt is made here to demonstrate its usefulness in actual specific contexts. It goes without saying that extensive and detailed numerical explorations of such implementations are needed to gain evidence that the IRFN method can be a useful and informative guide to understanding systemic risk. Fundamental questions of an implementation nature such as the accuracy of the large $N$ approximations, the development of efficient computation schemes, large scale simulation experiments, and calibration to real network data, are very important but would amount to an enormous expansion of the scope of this paper, and by necessity are postponed to future works. 

As it stands, this paper provides a number of innovative mathematical ideas. The first is that the IRFN framework provides a flexible mathematical representation applicable to real world networks viewed at a suitably coarse grained scale. For example, it can provide a representation of the global financial network that can be useful in  understanding SR spillovers between countries. Not only is the IRFN framework versatile, it possesses an underlying mathematical structure called  the ``locally tree-like independence property'' that means large $N$ asymptotic formulas for the network can be related to an associated Galton-Watson branching process. Some of the mathematical details of this type of ``percolation theory'' remain conjectural, and open to future research. 

A second innovation is the analysis of the EN 2001 default modelling cascade mechanism, and its generalizations, within the IRFN framework. It is shown that these mechanisms possess a related kind of LTI property, that in essence unravels the dependence structure of the sequence of balance sheet random variables arising from the cascade mapping, under the condition that the skeleton connection graph is a random tree.  This property motivates the large $N$ cascade mapping formulas and fixed point equilibrium condition derived in the paper that dramatically extend rigorous large $N$ results of \cite{AminContMinc16} and \cite{DeteMeyePana17}, but whose proof remains another open problem for research. As noted in those works, the biggest conceptual advantage of closed analytical cascade mapping formulas such as these is to provide measures of resilience of the network that depend only on a reduced set of relevant model parameters.

A third important contribution is the extension of random financial networks to involve quite general classes of distributions, such as inhomogeneous random exposures and balance sheets. Moreover, the solvency cascade mapping analysis extends naturally in this wider setting, leading to a remarkably compact formula \eqref{compactCM}.

 Perhaps the key obstacle in systemic risk research is the strategic value and importance of counterparty data that makes it extremely confidential, to the extent that collaboration between countries may seem to be impossible. A fourth contribution of the paper, the calibration method outlined in Section 4, addresses this issue.  This method relies only on aggregated data that is naturally anonymized, which makes shared calibration exercises possible when implementing a carefully designed global  IRFN  model.

Future work on the foundations of the IRFN approach, as opposed to the implementation issues mentioned above, may take several directions. One way to go is to intertwine solvency shocks with funding liquidity shocks as well as indirect channels of contagion. With two or more of the channels of systemic risk, the  LTI  property of the cascade mechanism seems to fail, complicating the large $N$ limit analysis. Another type of extension is to add node types for financial institutions such as hedge funds, firms, central clearinghouses, central banks etc. This can be implemented within the IRFN framework, introducing another dimension of complexity. A third type of extension will be to make the exposures have the meaning of cash, collateral and other types of contract.

 In a nutshell, this paper provides a flexible and convenient framework with many potential applications to systemic risk. However, proving the value of the IRFN approach will depend most heavily on the results that compare  cascade simulations to analytic cascade formulas, for network models calibrated to reflect the properties of real world financial systems.

\bibliographystyle{abbrvnat}

%\bibliography{/Users/tomhurd/Nextcloud/Documents/FINANCE_RESEARCH/Latex_Shared_Files/truefinancereferences}

\appendix
\section{Proofs}

  \begin{proof}[Proof of Lemma 1] Under the assumptions, one can show directly that $f(x,y):=\log(1+yg(x,y))]- yg(x,0)$ satisfies $\lim_{y\to 0}f(x,y)=\lim_{y\to 0}\partial_yf(x,y)=0$ and hence by Taylor's remainder theorem
  \[ f(x,y)=\int^y_0(y-v)  \partial^2_yf(x,v)\dee v \]
  One can also show that $\partial^2_yf(x,v)$ is in $L_2(I)$ for each value $v\in[0,\bar y]$ provided $\bar y>0$ is small enough. Then, by Fubini's Theorem, for $y\in[0,\bar y]$
  \[ \||\log(1+yg(x,y))]- g(x,0)\||^2\le (\int^y_0(y-v)\dee v)^2\max_{v\in[0,\bar y]}\||\partial^2_yf(x,v)\||^2\le M y^4\]
  for some constant $M$, from which the result follows.
\end{proof}

\end{document}